\newcommand\reallywidehat[1]{%
\savestack{\tmpbox}{\stretchto{%
  \scaleto{%
    \scalerel*[\widthof{\ensuremath{#1}}]{\kern-.6pt\bigwedge\kern-.6pt}%
    {\rule[-\textheight/2]{1ex}{\textheight}}
  }{\textheight}%
}{0.5ex}}%
\stackon[1pt]{#1}{\tmpbox}%
}
\newtheorem{theorem}{Theorem}
\newtheorem{lemma}{Lemma}
\def\frac#1#2{{\begingroup #1\endgroup\over #2}}
\renewcommand{\geq}{\geqslant}
\def \addLANLT {Theoretical Division, Los Alamos National Laboratory, Los Alamos, NM 87545, USA}
\def \addLANLS {Statistics Group, Los Alamos National Laboratory, Los Alamos, NM 87545, USA}
\begin{document}

\title{Length scale estimation of excited quantum oscillators}

\author{Tyler Volkoff}
\email{volkoff@lanl.gov}
\affiliation{\addLANLT}
\author{Giri Gopalan}
\email{ggopalan@lanl.gov}
\affiliation{\addLANLS}

\begin{abstract}
Massive quantum oscillators are finding increasing applications in proposals for high-precision quantum sensors and interferometric detection of weak forces. Although optimal estimation of certain properties of massive quantum oscillators such as phase shifts and displacements have strict counterparts in the theory of quantum estimation of the electromagnetic field, the phase space anisotropy of the massive oscillator is characterized by a length scale parameter that is an independent target for quantum estimation methods. We show that displaced squeezed states and excited eigenstates of a massive oscillator exhibit Heisenberg scaling of the quantum Fisher information for the length scale with respect to excitation number, and discuss asymptotically unbiased and efficient estimation allowing to achieve the predicted sensitivity. We construct a sequence of entangled states of two massive oscillators that provides a boost in length scale sensitivity equivalent to appending a third massive oscillator to a non-entangled system, and a state of $N$ oscillators exhibiting Heisenberg scaling with respect to the total energy.
\end{abstract}
\maketitle

\section{Introduction}
Massive quantum oscillators provide a framework for describing a wide range of natural and engineered particle systems:  from shell models of the atomic nucleus \cite{MOSHINSKY1962384}, to atoms in a harmonic optical trap \cite{lieb} and ions in radio-frequency traps \cite{winewine}, to optically-controlled nano- and micro-mechanical resonators \cite{Groblacher2009,Samanta2023,PhysRevA.89.013854,Rips_2012,Banerjee:23,choi2024quantumlimitedimagingnanomechanical,TANG2022127966,10.1116/5.0022349}. As methods for controlling the quantum state of massive oscillators have progressed, they increasingly appear in technological applications, such as proposals for boosting quantum metrology \cite{Woolley_2008,rahman2024} and gravitational wave detection \cite{osti_5123931,lll,PhysRevLett.105.070403}. From the perspective of quantum information theory, massive quantum oscillators provide examples of continuous-variable quantum systems, so the well-developed theories of computation, communication, and estimation in the continuous-variable quantum setting carry over, at least in principle, to the example of massive quantum oscillators.

However, unlike massless quantum oscillators such as the quantum electromagnetic field, massive quantum oscillators are characterized by anisotropic phase space distribution in the ground state. From a kinematic viewpoint, this anisotropy amounts to a different choice of complex structure for the symplectic space $\mathbb{R}^{2}$, with $q=\sqrt{\hbar}{a+a^{\dagger}\over \sqrt{2}}$, $p=\sqrt{\hbar}{ia^{\dagger}-ia\over \sqrt{2}}$ giving the canonical phase and amplitude quadratures of a single mode of the electromagnetic field \cite{Mandel_Wolf_1995} and $q=\mathcal{L}\sqrt{\hbar}\left( {a+a^{\dagger}\over \sqrt{2}} \right)$, $p={\sqrt{\hbar}\over \mathcal{L}} \left( {ia^{\dagger}-ia\over \sqrt{2}} \right)$ giving the position and momentum operators of the massive oscillator for some $\mathcal{L}>0$ with $\mathcal{L}\sqrt{\hbar}$ having units of length. Specifically,
\begin{align}
    \Delta &:= \begin{pmatrix} 0&1 \\ -1 & 0\end{pmatrix} \nonumber \\
    J&:= \begin{pmatrix} 0&-\mathcal{L}^{-1} \\ \mathcal{L} & 0\end{pmatrix}
\end{align}
give the standard symplectic form and an operator of complex structure, respectively, with $J$ clearly satisfying the defining condition $J^{2}=-I$ and the compatibility conditions $\Delta J \succeq 0$, $J^{\intercal}\Delta J = \Delta$.

In practice, this complex structure has the important physical consequence of scaling the contributions of the kinetic and potential energies relative to the total energy of the oscillator. These relative contributions can be phrased in terms of the length scale of the oscillator, i.e.,  the parameter $L:=\sqrt{ \hbar \over m\omega}$, with $m$ the oscillator mass, $\omega$ the oscillator frequency, and the constant $\hbar$ can be viewed as arising from the canonical commutation relation $[q,p]=i\hbar$. The physical importance of $L$ stems from the fact that it defines the width of the normal distribution characterizing the position statistic of the lowest energy state of the oscillator. In the present work, we analyze the problem of quantum estimation of $L$ (for convenience, we will consider estimation of $d:=L^{-2}$ so that the parameter of interest appears in the normalization factors of the relevant wavefunctions and is monotonically increasing with the oscillator mass $m$).

The problem of estimation of the length scale $d$ given the state $\ket{\psi_{d}}$ of a massive oscillator is not the same as the continuous-variable squeezing estimation problem. The latter problem falls into the shift model of quantum estimation \cite{holevo}, in which the parametrized states have the form $\ket{\psi_{r}}=S(r)\ket{\psi}$, where $S(r)$ is the unitary squeezing operator, and has been extensively studied \cite{PhysRevA.73.062103}. By contrast, parametrized states of the length scale estimation problem take the form
\begin{equation}
    \ket{\psi(d)}=\sum_{n=0}^{\infty}c_{n}\ket{\psi_{n}(d)} \; , \; \sum_{n=0}^{\infty}\vert c_{n}\vert^{2}=1,
    \label{eqn:states}
\end{equation}
where 
\begin{equation}
\langle q=x \vert \psi_{n}(d)\rangle:= \left( d\over \pi\right)^{1/4}{1\over \sqrt{2^{n}n!}} e^{-{d\over 2}x^{2}}H_{n}(x\sqrt{d}) 
\label{eqn:wfdef}
\end{equation}
is the wavefunction of the $n$-th excited state of a massive quantum oscillator \cite{Sakurai1993Modern}. In (\ref{eqn:wfdef}), the canonical quadrature $q$ is related to the excitation creation and annihilation operators according to $q={a+a^{\dagger}\over \sqrt{2d}}$ and $H_{n}(x)$ is the $n$-th Hermite polynomial. We use $\psi_{n}(d)$ to denote the rank-one projection $\ket{\psi_{n}(d)}\bra{\psi_{n}(d)}$.

\section{Preliminaries}

In this section we state the preliminaries from quantum estimation theory.  Our approach to the problem of estimation of $d$ is based on the existence of a normalized positive operator-valued measure (POVM) $\lbrace E(\hat{d})\rbrace$ on the half-line $(0,\infty)$ of estimators $\hat{d}$ that saturates the following single-shot quantum Cram\'{e}r-Rao inequality \cite{holevo,PhysRevLett.72.3439}
\begin{align}
    \min_{\lbrace E(\,\hat{d}\,)\rbrace \text{POVM}}\int_{0}^{\infty} \, \left( \hat{d}-d \right)^{2}\text{tr}[\rho_{d}E(\hat{d})] &\ge {1\over \text{QFI}(d)} 
    \label{eqn:qcrb}
\end{align}
which is valid for an unbiased estimator $\hat{d}$ if the probe state $\rho_{d}$ is differentiable with respect to $L$ and satisfies a further regularity condition (Section 6.5 of \cite{holevo}). These requirements guarantee the existence of the symmetric logarithmic derivative (SLD) $L_{d}$, which is a self-adjoint operator that satisfies
\begin{equation}
    \partial_{d}\rho_{d} = {1\over 2}L_{d}\rho_{d} + {1\over 2}\rho_{d}L_{d}.
    \label{eqn:sylv}
\end{equation}
The quantum Fisher information (QFI) is defined by
\begin{equation}
    \text{QFI}(d) := \text{tr}\rho_{d}L_{d}^{2}.
    \label{eqn:qfqf}
\end{equation}
The SLD is not unique in general due to the fact that the right hand side of (\ref{eqn:sylv}) involves only the support of $\rho_{d}$ \cite{PRXQuantum.4.040336}. In the present work, we will use the definition (\ref{eqn:qfqf}) to calculate the QFI, and will also use the relation
\begin{equation}
    \text{QFI}(d) = -4\partial_{d'}^{2}\sqrt{F(\rho_{d},\rho_{d'})}\big\vert_{d'=d}
    \label{eqn:bb}
\end{equation}
relating the QFI to the Bures metric on the state manifold, the latter being defined in terms of the quantum fidelity $F(\rho_{d},\rho_{d'}):= \left(\text{tr}\sqrt{\sqrt{\rho_{d'}}\rho_{d}\sqrt{\rho_{d'}}}\right)^{2}$.

In practice, the optimal POVM in (\ref{eqn:qcrb}) often has a structure that is not easily related to an observable that can be measured in an experiment. If, instead of the optimal observable $\int_{0}^{\infty}\hat{d}E(\hat{d})$, one measures another observable $A$ and carries out classical postprocessing to obtain an estimator $\hat{d}$, one finds that the classical Fisher information (CFI) of the probability distribution $p_{d}^{(A)}(\hat{d})$ arising from measurement of $A$ satisfies \cite{holmom}
\begin{equation}
    {\left( \partial_{d}\langle A \rangle_{\rho_{d}}\right)^{2} \over \text{Var}_{\rho_{d}}A}\le \text{CFI}^{(A)}(d)
    \label{eqn:snrupp}
\end{equation}
where
\begin{equation}
    \text{CFI}^{(A)}(d):= 4\int_{0}^{\infty} \left( \partial_{d}\sqrt{p_{d}(\hat{d})} \right)^{2}.
    \label{eqn:snrsnr}
\end{equation}
The statistical meaning of the left hand side of (\ref{eqn:snrupp}) is that it defines the variance of asymptotically normally distributed estimates $\hat{d}$ over many measurements of $A$, according to the central limit theorem \cite{ps,RevModPhys.90.035005}. Physically, it can be considered as the reciprocal of the noise of a units-corrected observable and its reciprocal is referred to as ``mean square noise'' in (3.16) of \cite{PhysRevA.33.4033}. We refer to the quantity on the left hand side of (\ref{eqn:snrupp}) as the reciprocal of the mean square noise. Equality is obtained if the estimator from measurement of $A$ is asymptotically efficient.
The proof of (\ref{eqn:qcrb}) proceeds through the fact that $\text{CFI}^{(A)}(d) \le \text{QFI}(d)$ for any self-adjoint $A$, so finding an observable with the reciprocal of the mean square noise having a value near $\text{QFI}(d)$ implies near-optimality of method of moments estimation of $d$ from measurement of $A$ \cite{PhysRevLett.122.090503}.

\section{Benchmarks: classical and Gaussian states of the massive quantum oscillator\label{sec:bench}}

By analogy with the notion of classical states of the electromagnetic field \cite{PhysRev.130.2529}, we consider a classical state of a single massive oscillator to be a convex combination of coherent states, where a coherent state $\ket{\psi_{\alpha}(d)}$ is defined by Poisson excitation statistics, i.e., $c_{n} = e^{-{\vert \alpha \vert^{2}\over 2}}{\alpha^{n}\over \sqrt{n!}}$, $\alpha \in \mathbb{C}$, in (\ref{eqn:states}). The $d$-dependence of the coherent state is evident through its wavefunction (for $\alpha \in \mathbb{R}$)
\begin{equation}
    \langle q=x\vert \psi_{\alpha}(d)\rangle = \left( {d\over \pi} \right)^{1\over 4}e^{-{d\over 2}\left( x-\sqrt{2\over d}\alpha\right)^{2}}.
\end{equation}
The inner product
\begin{equation}
    \langle \psi_{\alpha}(d)\vert \psi_{\alpha}(d')\rangle = (dd')^{{1\over 4}}e^{-2\alpha^{2}}\sqrt{2\over d+d'}e^{\alpha^{2}(\sqrt{d}+\sqrt{d'})^{2}\over d+d'}
\end{equation}
can be used to obtain the QFI according to the formula $\text{QFI}(d)=-4\partial_{d'}^{2}\langle \psi(d)\vert\psi(d')\rangle\vert_{d'=d}$ that follows from (\ref{eqn:bb}) when a pure probe state $\ket{\psi(d)}$ with real wavefunction is utilized. One obtains
\begin{equation}
    \text{QFI}(d) = {1+2\alpha^{2}\over 2d^{2}},
    \label{eqn:cohqfi}
\end{equation}
which scales linearly with the energy of the state $\ket{\psi_{\alpha}(d)}$, i.e., linearly with $\alpha^{2}$. Convexity of the QFI implies that this linear scaling is the best achievable over all classical states of the massive oscillator.

A squeezed state of the massive oscillator is obtained by applying the unitary squeeze operator $S(z)=e^{{1\over 2}(\bar{z}a^{2} - za^{\dagger 2})}$, $z\in \mathbb{C}$, to the vacuum state $\ket{\psi_{0}(d)}$. The squeezed state is Gaussian, and taking $z={1\over 2}\ln D$ for $D>1$, one obtains the wavefunction
\begin{equation}
    \left( {dD\over \pi}\right)^{1\over 4}e^{-{dD\over 2}x^{2}}.
    \label{eqn:sqsq}
\end{equation}
Although it is clear that the position and momentum fluctuations of the squeezed massive oscillator depend on the product $dD$, we consider that the length scale $d$ corresponds to an intrinsic property of the system, whereas the unitless scale factor $D$ is achieved by an external control coupled to the system. Therefore, $d$ and $D$ are independent parameters of the state. However, the generator of the unitary squeezing operator is a linear combination of operators $qp$ and $pq$ and is therefore independent of the mass of the oscillator. This can be contrasted with the generator of phase space displacements $\alpha a^{\dagger} - \bar{\alpha}a$, which depends on the mass through the choice of complex structure, and displaces the oscillator in phase space distance units of $L$ along position axis and ${\hbar \over L}$ along the momentum axis. Therefore, squeezing the ground state of a massive oscillator does not give a probe state that is more sensitive to $d$ compared to the ground state itself.

The fact that phase space displacement increases the sensitivity of the ground state to changes in the oscillator length then motivates one to consider a probe state $\ket{\psi_{\alpha,D}(d)}$ formed by displacing the squeezed ground state in (\ref{eqn:sqsq}). For real $\alpha$ and real $D>1$, the wavefunction is given by
\begin{equation}
    \langle q=x \vert \psi_{\alpha,D}(d)\rangle = \left( {dD\over \pi}\right)^{1\over 4}e^{-{dD\over 2}(x-\sqrt{2\over d}\alpha)^{2}},
    \label{eqn:sqsq2}
\end{equation}
and one computes the QFI for fixed $\alpha$ and $D$ to be
\begin{equation}
    \text{QFI}(d)= {1 + 2\alpha^{2}D \over 2d^{2}}.
\end{equation}
The total energy $E$ of the state $\vert \psi_{\alpha,D}(d)\rangle$ satisfies $E \propto \alpha^{2} + {D^{2}+D^{-2}-2\over 4}$. Remarkably, the problem of maximizing $\alpha^{2}D$ with respect to the total energy constraint $E$ also appears in the optical phase estimation tasks arising in $SU(1,1)$ interferometers \cite{PhysRevA.33.4033}, and $SU(2)$ interferometers \cite{PhysRevD.23.1693}. Optimally allocating energy to displacement and squeezing then results in $O(E^{2})$ scaling of the QFI, i.e., Heisenberg scaling, which is the focus of the next section. It is notable that the parameter $d$ and the oscillator phase $\phi:= \omega t$ (for evolution time $t$) depend linearly on the frequency $\omega$. Therefore, for fixed mass and fixed evolution time, one can see from $\phi = {\hbar td\over m}$ that the optimal sensitivity scaling of phase estimation allows one to infer the optimal sensitivity scaling to both phase shifts and length scale changes. However, one must be careful to note that there are probe states that are sensitive to length scale changes, but do not accumulate a dynamical phase.

\section{Heisenberg scaling\label{sec:hs}}

For a system of massive oscillators, we characterize Heisenberg scaling of the QFI for the length scale parameter $d$ by $\text{QFI(d)}=O(E^{2})$, where $E$ is the expected total energy of the massive oscillator. Similar to the case of rotation estimation, i.e., estimation of the parameter $\theta$ in $e^{i\theta a^{\dagger}a}$, the $O(E^{2})$ scaling can be interpreted as the scaling of the maximal area in phase space that is relevant for distinguishing probe states having close parameter values and each having total energy $E$. However, whereas energy eigenvectors $\ket{\psi_{n}(d)}$ are insensitive to rotations, we finds that these states provide Heisenberg scaling for estimation of $d$. The following Lemma will allow us to obtain an analytical formula for the QFI.
\begin{lemma}
\label{lem:ooo}
    For any $n\in \mathbb{N}_{\ge 2}$,
    \begin{align}
        \partial_{d}\ket{\psi_{n}(d)} &= {1\over 4d}\left( \sqrt{n(n-1)}\ket{\psi_{n-2}(d)} \right. \nonumber \\
        &{} \left. - \sqrt{(n+1)(n+2)}\ket{\psi_{n+2}(d)}\right).
        \label{eqn:slsl}
    \end{align}
    The derivatives of the ground state and first excited state are, respectively,
    \begin{align}
        \partial_{d}\ket{\psi_{0}(d)} &=-{\sqrt{2}\over 4d}\ket{\psi_{2}(d)}\nonumber \\
        \partial_{d}\ket{\psi_{1}(d)} &=-{\sqrt{6}\over 4d} \ket{\psi_{3}(d)}. 
        \label{eqn:sllow}
    \end{align}
\end{lemma}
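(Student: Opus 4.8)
\emph{Proof plan.} The cleanest starting point is to separate the two ways $d$ enters (\ref{eqn:wfdef}): through the overall normalization and through the dilation of the coordinate inside the Hermite polynomial. Passing to the rescaled variable $y:=x\sqrt d$ and writing $\varphi_m(y):=\pi^{-1/4}(2^m m!)^{-1/2}e^{-y^2/2}H_m(y)$ for the $d$-independent Hermite functions, Eq.~(\ref{eqn:wfdef}) becomes $\langle q=x\vert\psi_m(d)\rangle = d^{1/4}\varphi_m(x\sqrt d)$, and the factor $d^{1/4}$ is precisely what keeps $\{\ket{\psi_m(d)}\}_m$ orthonormal in $x$ at each fixed $d$. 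Differentiating with the chain rule, the $d^{1/4}$ prefactor contributes $\frac{1}{4d}\langle x\vert\psi_n(d)\rangle$, while the coordinate dependence contributes $d^{1/4}\varphi_n'(x\sqrt d)\cdot\frac{x}{2\sqrt d}=\frac{1}{2d}\,d^{1/4}\bigl(y\varphi_n'(y)\bigr)\big\vert_{y=x\sqrt d}$. So the whole problem reduces to re-expanding $y\varphi_n'(y)$ in the $\varphi_m$ basis and then replacing each $d^{1/4}\varphi_m$ by $\langle x\vert\psi_m(d)\rangle$.

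For the expansion I would use the ladder-operator presentation of the Hermite functions: with $a=\frac{1}{\sqrt2}(y+\partial_y)$ and $a^\dagger=\frac{1}{\sqrt2}(y-\partial_y)$ one has $a\varphi_m=\sqrt m\,\varphi_{m-1}$ and $a^\dagger\varphi_m=\sqrt{m+1}\,\varphi_{m+1}$, hence $y=(a+a^\dagger)/\sqrt2$, $\partial_y=(a-a^\dagger)/\sqrt2$, and multiplying these out with $[a,a^\dagger]=1$ gives $y\partial_y=\frac12(a^2-a^{\dagger2}-1)$. Therefore
\[
y\varphi_n'(y)=\frac12\Bigl(\sqrt{n(n-1)}\,\varphi_{n-2}(y)-\sqrt{(n+1)(n+2)}\,\varphi_{n+2}(y)-\varphi_n(y)\Bigr).
\]
Substituting this back and multiplying through by $d^{1/4}$, the diagonal $-\frac12\varphi_n$ piece cancels against the $\frac{1}{4d}\ket{\psi_n(d)}$ coming from the prefactor, leaving exactly (\ref{eqn:slsl}); for $n=0,1$ the lowering coefficient $\sqrt{n(n-1)}$ vanishes, which gives (\ref{eqn:sllow}) with $\sqrt{1\cdot2}=\sqrt2$ and $\sqrt{2\cdot3}=\sqrt6$.

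The only real hazard is bookkeeping — keeping the normalization factor and the dilated argument separate and applying the Hermite identity at the correct fixed $d$ — but once the rescaling is in place this is routine. As a consistency check on the cancellation of the diagonal term, note that each $\langle q=x\vert\psi_n(d)\rangle$ is real, so $\langle\psi_n(d)\vert\partial_d\psi_n(d)\rangle=\frac12\partial_d\langle\psi_n(d)\vert\psi_n(d)\rangle=0$ and no $\ket{\psi_n(d)}$ component can survive; equivalently, (\ref{eqn:slsl}) says $\partial_d\ket{\psi_n(d)}=\frac{1}{4d}(a^2-a^{\dagger2})\ket{\psi_n(d)}$, so changing $d$ acts through the off-diagonal, squeezing-type generator $a^2-a^{\dagger2}$, which is up to a constant the phase-space dilation generator built from $qp$ and $pq$. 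One could alternatively derive the Lemma from this structural fact, writing $\ket{\psi_n(d)}=V(d)\ket{\psi_n(1)}$ for the unitary dilation group $(V(d)\varphi)(x)=d^{1/4}\varphi(x\sqrt d)$ and differentiating its generator, but the direct wavefunction computation above is the most economical route.
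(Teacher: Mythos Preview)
Your proof is correct. Both your argument and the paper's follow the same overall strategy---differentiate the wavefunction and re-expand in the $\{\ket{\psi_m(d)}\}$ basis---but the technical execution differs. The paper differentiates the three factors $(d/\pi)^{1/4}$, $e^{-dx^{2}/2}$, and $H_n(x\sqrt d)$ separately, which produces both an $x^{2}H_n$ term and an $xH_n'$ term; it then reduces these using the two Hermite-polynomial identities $u^{2}H_n(u)=\tfrac14 H_{n+2}+\tfrac{2n+1}{2}H_n+n(n-1)H_{n-2}$ and $u\partial_u H_n(u)=nH_n+2n(n-1)H_{n-2}$, and checks $n=0,1$ by direct inspection. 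By packaging the Gaussian into the Hermite function $\varphi_n$ from the outset, you collapse both contributions into the single object $y\varphi_n'(y)$ and handle it via the ladder-operator identity $y\partial_y=\tfrac12(a^{2}-a^{\dagger 2}-1)$; the cases $n=0,1$ then fall out automatically from $\sqrt{n(n-1)}=0$. Your route is a bit more economical and has the bonus of making explicit, already at this stage, the squeezing-generator form $\partial_d\ket{\psi_n(d)}=\tfrac{1}{4d}(a^{2}-a^{\dagger 2})\ket{\psi_n(d)}$ that the paper only reaches later in its scale-model discussion.
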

\begin{proof}
    The derivative $\partial_{d}\langle q=x\vert \psi_{n}(d)\rangle$ can be written as the product of $e^{-{d\over 2}x^{2}}$ and an order-$(n+2)$ polynomial over $x\sqrt{d}$ with $d$-dependent coefficients. This polynomial can then be expanded over the $L^{2}$-orthogonal basis of Hermite polynomials $H_{n}(x\sqrt{d})$ (with respect to measure $e^{-{d\over 2}x^{2}}dx$ on $\mathbb{R}$). The relevant identities for this expansion are:
    \begin{align}
        u^{2}H_{n}(u)&= {1\over 4}H_{n+2}(u)+{2n+1\over 2}H_{n}(u) \nonumber \\
        &+ n(n-1)H_{n-2}(u) \label{eqn:usqh} \\
        u\partial_{u}H_{n}(u)&= nH_{n}(u)+2n(n-1)H_{n-2}(u).
    \end{align}
    Rescaling the individual terms and bringing back the factor $e^{-{d\over 2}x^{2}}$ gives an expansion in terms of the orthonormal massive oscillator wavefunctions in which the only non-zero coefficients are on $\ket{\psi_{n\pm 2}(d)}$, with the values in (\ref{eqn:slsl}).
    The derivatives of the ground state and first excited state are verified directly from the expressions for $H_{n}(x\sqrt{d})$, $n=0,1,2,3$.
\end{proof}
We now show that Heisenberg scaling is obtained for the pure probe state $\ket{\psi_{n}(d)}$.
\begin{theorem}
For $n\in \mathbb{N}_{\ge 0}$, the parametrized state $\ket{\psi_{n}(d)}$ has
    \begin{equation}
        \mathrm{QFI}(d)={n^{2}+n+1\over 2d^{2}} = \mathrm{Var}_{\ket{\psi_{n}(d)}}q^{2} .
        \label{eqn:qfqfqf}
    \end{equation}
    \label{thm:hhh}
\end{theorem}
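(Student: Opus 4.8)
The plan is to combine the pure-state form of the Bures/QFI formula with Lemma~\ref{lem:ooo}. Since $\ket{\psi_{n}(d)}$ has a real wavefunction, the version of (\ref{eqn:bb}) quoted in Section~\ref{sec:bench}, $\mathrm{QFI}(d)=-4\partial_{d'}^{2}\langle\psi_{n}(d)|\psi_{n}(d')\rangle|_{d'=d}$, applies; differentiating $\langle\psi_{n}(d)|\psi_{n}(d)\rangle=1$ twice in $d$ and using reality of the wavefunction turns this into $\mathrm{QFI}(d)=4\Vert\partial_{d}\psi_{n}(d)\Vert^{2}$, the cross term $|\langle\psi_{n}(d)|\partial_{d}\psi_{n}(d)\rangle|^{2}$ vanishing by normalization (equivalently, because $\partial_{d}\ket{\psi_{n}(d)}$ lands only on $\ket{\psi_{n\pm2}(d)}$). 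The key point is that Lemma~\ref{lem:ooo} already expresses $\partial_{d}\ket{\psi_{n}(d)}$ in the orthonormal basis $\{\ket{\psi_{m}(d)}\}_{m}$, so this norm requires no further integration.

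For $n\ge 2$, orthonormality of $\ket{\psi_{n-2}(d)}$ and $\ket{\psi_{n+2}(d)}$ applied to (\ref{eqn:slsl}) gives
\[
\Vert\partial_{d}\psi_{n}(d)\Vert^{2}=\frac{n(n-1)+(n+1)(n+2)}{16d^{2}}=\frac{n^{2}+n+1}{8d^{2}},
\]
hence $\mathrm{QFI}(d)=(n^{2}+n+1)/(2d^{2})$. The cases $n=0,1$ are identical using (\ref{eqn:sllow}), giving $4\cdot 2/(16d^{2})=1/(2d^{2})$ and $4\cdot 6/(16d^{2})=3/(2d^{2})$, consistent with the formula at $n=0,1$.

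For the variance identity I would compute the action of $q^{2}$ on $\ket{\psi_{n}(d)}$, using that $\ket{\psi_{n}(d)}$ is the $n$-th number eigenstate of the annihilation operator in $q=(a+a^{\dagger})/\sqrt{2d}$. Normal ordering gives $q^{2}=(a^{2}+a^{\dagger 2}+2a^{\dagger}a+1)/(2d)$, so $\langle q^{2}\rangle_{\psi_{n}(d)}=(2n+1)/(2d)$ and
\[
\bigl(q^{2}-\langle q^{2}\rangle\bigr)\ket{\psi_{n}(d)}=\frac{\sqrt{n(n-1)}\,\ket{\psi_{n-2}(d)}+\sqrt{(n+1)(n+2)}\,\ket{\psi_{n+2}(d)}}{2d}.
\]
Comparing with Lemma~\ref{lem:ooo}, this vector and $2\,\partial_{d}\ket{\psi_{n}(d)}$ have components of equal magnitude on $\ket{\psi_{n-2}(d)}$ and on $\ket{\psi_{n+2}(d)}$, differing only in the sign of the $\ket{\psi_{n+2}(d)}$ component (for $n=0,1$ the $\ket{\psi_{n-2}(d)}$ term is simply absent), so the two have equal norm. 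Taking norms yields $\mathrm{Var}_{\psi_{n}(d)}q^{2}=4\Vert\partial_{d}\psi_{n}(d)\Vert^{2}=\mathrm{QFI}(d)$, which closes the argument.

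No single step is a serious obstacle; the only care needed is in the operator ordering for $q^{2}$ and in tracking the normalization prefactors ($1/(2d)$ versus $1/(4d)$) so that $(q^{2}-\langle q^{2}\rangle)\ket{\psi_{n}(d)}$ and $2\,\partial_{d}\ket{\psi_{n}(d)}$ are matched correctly. The substantive content is the coincidence $\mathrm{QFI}(d)=\mathrm{Var}\,q^{2}$: it holds because both $\partial_{d}\ket{\psi_{n}(d)}$ and $(q^{2}-\langle q^{2}\rangle)\ket{\psi_{n}(d)}$ lie in $\mathrm{span}\{\ket{\psi_{n-2}(d)},\ket{\psi_{n+2}(d)}\}$ with matrix elements of the same magnitude built from $a^{2}$ and $a^{\dagger 2}$ — the operator $q^{2}$ is \emph{not} the generator of the $d$-flow (that generator is $\propto i(a^{2}-a^{\dagger 2})$) but a distinct quadratic with the same $\ket{\psi_{n\pm2}(d)}$ couplings, differing only by a relative sign, so the norms agree.
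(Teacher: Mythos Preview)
Your proof is correct. For the left equality $\mathrm{QFI}(d)=(n^{2}+n+1)/(2d^{2})$ you follow essentially the same route as the paper: reduce to $\mathrm{QFI}(d)=4\Vert\partial_{d}\ket{\psi_{n}(d)}\Vert^{2}$ (the paper gets there via the pure-state SLD $L_{d}=2\partial_{d}\psi_{n}(d)$, you via the Bures formula, but these are equivalent one-liners) and then read off the norm from Lemma~\ref{lem:ooo}.

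For the right equality you take a genuinely different and somewhat more illuminating path. The paper computes $\langle q^{2}\rangle$ and $\langle q^{4}\rangle$ separately by iterating the Hermite identity (\ref{eqn:usqh}) and then subtracts. You instead use the ladder-operator expression $q^{2}=(a^{2}+a^{\dagger 2}+2a^{\dagger}a+1)/(2d)$ to write $(q^{2}-\langle q^{2}\rangle)\ket{\psi_{n}(d)}$ directly in the basis $\{\ket{\psi_{n\pm 2}(d)}\}$ and observe that it agrees, up to a sign on the $\ket{\psi_{n+2}(d)}$ component, with $2\,\partial_{d}\ket{\psi_{n}(d)}$ from Lemma~\ref{lem:ooo}. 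This buys you the variance identity without computing $\langle q^{4}\rangle$ and, more importantly, it exposes \emph{why} $\mathrm{QFI}(d)=\mathrm{Var}\,q^{2}$: both $q^{2}-\langle q^{2}\rangle$ and the $d$-flow generator $\tfrac{i}{4}(a^{2}-a^{\dagger 2})$ couple $\ket{\psi_{n}(d)}$ only to $\ket{\psi_{n\pm 2}(d)}$ with identical matrix-element magnitudes. The paper's approach is more mechanical but self-contained at the wavefunction level; yours makes the kinematic coincidence manifest.
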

\begin{proof}
    Since $\ket{\psi_{n}(d)}$ is a pure state, the SLD is $L_{d}:=2\partial_{d}\psi_{n}(d)$. By Lemma \ref{lem:ooo}, the component of $L_{d}^{2}$ on the rank one operator $\psi_{n}(d)$, i.e., $\text{QFI}(d)$, is equal to $4\Vert \partial_{d}\ket{\psi_{n}(d)}\Vert^{2}$. Simplifying by using the explicit expression in (\ref{eqn:slsl}) for $n\ge 2$ and (\ref{eqn:sllow}) for $n=0,1$ gives the left equality in (\ref{eqn:qfqfqf}).

    The right equality is obtained by applying (\ref{eqn:usqh}) once to obtain $\langle \psi_{n}(d)\vert q^{2}\vert \psi_{n}(d)\rangle$ and twice to obtain $\langle \psi_{n}(d)\vert q^{4}\vert \psi_{n}(d)\rangle$. This equality should be viewed as a kinematic relationship rather than a general expression for the QFI, since both the probe state $\ket{\psi_{n}(d)}$ and the position operator $q$ depend on $d$.
\end{proof}

For practical applications, it is desirable to find an observable $A$ such that the reciprocal of the mean square noise satisfies
\begin{equation}
{\left( \partial_{d}\langle A \rangle_{\ket{\psi_{n}(d)}}\right)^{2} \over \text{Var}_{\ket{\psi_{n}(d)}}A} = O(n^{2})
\label{eqn:snsn}
\end{equation}
globally (i.e., for all $d$) because this condition guarantees Heisenberg scaling of the CFI and, therefore, the QFI, according to (\ref{eqn:snrupp}). Consider taking $A=\psi_{0}(1)$, the projection onto the Fock vacuum $\ket{\psi_{0}(1)}$ for the fiducial length scale $d=1$. From (\ref{eqn:sqsq}), one notes that in the basis $\lbrace \ket{\psi_{n}(d)}\rbrace_{n=0}^{\infty}$, the state $\ket{\psi_{0}(1)}$ is an anti-squeezed state
\begin{equation}
\ket{\psi_{0}(1)} = {1\over \sqrt{\cosh \ln d}}\sum_{n=0}^{\infty} \sqrt{{2n\choose n}} \left( {-\tanh\ln d \over 2}\right)^{n}\ket{\psi_{2n}(d)}.
\end{equation}  The derivative  $\partial_{d}\langle A\rangle_{\ket{\psi_{n}(d)}}$ associated with $A$ has two undesirable properties 1. it is zero when the excitation number $n$ of the probe state $\ket{\psi_{n}(d)}$ is odd, and 2. for $n=2m$, one finds the scaling behavior $\partial_{d}\langle A\rangle_{\ket{\psi_{n}(d)}}=O(\sqrt{n})$ and $\text{Var}_{\ket{\psi_{n}(d)}}A=O(1)$ with respect to $n$, so the quadratic scaling in (\ref{eqn:snsn}) cannot hold. However, a projective readout with Heisenberg scaling regardless of the parity of $n$ can be obtained when allowing two copies of the probe state (namely $\ket{\psi_{n}(d)}^{\otimes 2}$) and taking $A$ to be the projection onto the entangled state \begin{equation}e^{{\pi\over 4}(a^{\dagger}_{1}a_{2} - a_{2}^{\dagger}a_{1})}e^{i{\pi\over 2}a^{\dagger}_{2}a_{2}}\ket{\psi_{0}(1)}^{\otimes 2}.\label{eqn:bell}\end{equation}
This projection can be viewed as a (finite-energy) Bell measurement of a system of two massive quantum oscillators. Note that although (\ref{eqn:bell}) is a linear optical transformation of two copies of the $d=1$ Fock vacuum, the linear optical transformation is with respect to the quantum oscillator at general $d$ and therefore generates an entangled state. For large $n$ and $d$, one obtains the Heisenberg scaling
\begin{equation}
{\left( \partial_{d}\langle A \rangle_{\ket{\psi_{n}(d)}}\right)^{2} \over \text{Var}_{\ket{\psi_{n}(d)}}A} \sim {256 n^{2}\over d^{8}}
\end{equation}
which exhibits Heisenberg scaling with the excitation number $n$, but also a polynomial reduction with respect to $d$ compared to (\ref{eqn:qfqfqf}). 

\subsection{Loss of Heisenberg scaling under noisy quantum channels}

Damping of the excitations of a massive quantum oscillator can be described by the quantum channel $\mathcal{N}_{\gamma}$ acting on a massive oscillator state $\rho$ by
\begin{equation}
\mathcal{N}_{\gamma}(\rho):=\sum_{j=0}^{\infty}{\gamma^{j}\over j!}(1-\gamma)^{a^{\dagger}a\over 2}a^{j}\rho a^{\dagger j}(1-\gamma)^{a^{\dagger}a\over 2}
\end{equation}
where $a\ket{\psi_{n}(d)}=\sqrt{n}\ket{\psi_{n-1}(d)}$, $a^{\dagger}\ket{\psi_{n}(d)}=\sqrt{n+1}\ket{\psi_{n+1}(d)}$ defines the action of the annihilation and creation operators \cite{ueda}. The loss parameter $\gamma$ interpolates between $\gamma=0$ (corresponding to the identity channel), and $\gamma=1$  (which destroys all components $\rho_{n,m}$ with $n,m>0$). 

Using the data processing inequality for the QFI \cite{PhysRevA.90.014101} or the fact that $\mathcal{N}_{\gamma}(\rho)$ is the solution of a Markovian open quantum system dynamics \cite{7442587,PhysRevA.82.042103}, one concludes that the QFI for $\mathcal{N}_{\gamma}(\rho)$ is bounded above by the QFI of $\rho$. Computation of the QFI of 
\begin{equation}
    \rho_{d}:= \mathcal{N}_{\gamma}(\psi_{n}(d)) = \sum_{j=0}^{\infty}{n\choose j}\gamma^{j}(1-\gamma)^{n-j}\psi_{n-j}(d)
    \label{eqn:noisystate}
\end{equation} is complicated by the fact that, according to Lemma \ref{lem:ooo}, the rank of $\partial_{d}\rho_{d}$ is greater than the rank of $\rho_{d}$. This precludes the use of a commonly encountered formula for the QFI of mixed probe states because the support of $\rho_{d}$ is a strict subspace of the support of the SLD operator \cite{paris,PhysRevA.95.052320}. However, the fact that the state $\rho_{d}$ depends on $d$ only through its eigenvectors motivates decomposing $\partial_{d}\rho_{d}$ into blocks according to the orthogonal projections $P_{d}$ and $Q_{d}$ onto $\text{Im}\rho_{d}$ and $\text{Im}\partial_{d}\rho_{d} \ominus \text{Im}\rho_{d}$, respectively. This procedure leads to an SLD operator
\begin{align}
    L_{d}&=2(R_{\rho_{d}} + L_{\rho_{d}})^{-1}P_{d}\partial_{d}\rho_{d} P_{d} \nonumber \\
    &+ 2R_{\rho_{d}}^{-1} Q_{d}\partial_{d}\rho_{d} P_{d} \nonumber \\
    &+ 2L_{\rho_{d}}^{-1} P_{d}\partial_{d}\rho_{d} Q_{d}
    \label{eqn:noisesld}
\end{align}
where $R_{O}$ ($L_{O}$) superoperators are the right (left) multiplication by operator $O$.

From the SLD (\ref{eqn:noisesld}), the QFI of probe state (\ref{eqn:noisystate}) is calculated for $n\ge 2$ as
\begin{align}
    \text{QFI}(d)&={n^{2}+n+1\over 2d^{2}} -{\gamma n^{2}\over d^{2}}+o(\gamma).
    \label{eqn:noisyqfi}
\end{align}
This equation shows that the Heisenberg scaling is destroyed linearly with increasing $\gamma$.

Because excitation damping reduces the excited state support of the probe state $\rho_{d}$, it is not surprising that Heisenberg scaling is destroyed with increasing $\gamma$. However, there are noise channels that destroy Heisenberg scaling of the QFI in more interesting ways. Especially in the context of motional heating of systems of electrically trapped ions that we discuss in Section \ref{sec:ti} as a physical system which can be used to investigate length scale estimation problems, it is relevant to consider thermalizing channels, namely, a continuous one-parameter family of length scale-independent quantum channels $\Phi_{t}$ that map the probe state $\rho_{d}$ to the thermal state. Mathematically,
\begin{equation}
\lim_{t\rightarrow \infty}\Vert \Phi_{t}(\rho_{d}) - \sum_{n=0}^{\infty}p_{n}\psi_{n}(d) \Vert_{1} = 0
\end{equation}
where $p_{n}:=\xi^{n}(1-\xi)$ is the geometric distribution on $\mathbb{Z}_{\ge 0}$ with parameter $\xi$ (physically, $-\ln \xi \in (0,\infty)$ is proportional to the reciprocal of the system temperature). Continuity of $\Phi_{t}$ (following from Stinespring's dilation theorem \cite{br}) and continuity of the probe state and its derivative with respect to $d$ imply that the states $\Phi_{t}(\rho_{d})$ and the operator $\Phi_{t}(\partial_{d}\rho_{d})$ are continuous with respect to $d$. This fact allows one to leverage the continuity of the QFI \cite{PhysRevA.100.032317} to show that the QFI of $\Phi_{t}(\rho_{d})$ converges to the QFI of the thermal state. However, the thermal state has an alternative expression
\begin{equation}
    \sum_{n=0}^{\infty}p_{n}\psi_{n}(d) = {1-\xi \over \pi \xi}\int_{\mathbb{C}} d^{2}\alpha \, e^{-{(1-\xi)\vert \alpha \vert^{2}\over \xi}} \psi_{\alpha}(d)
\end{equation}
with $d^{2}\alpha = d\text{Re}\alpha d\text{Im}\alpha$. Convexity of the QFI \cite{PhysRevA.63.042304} and the result in (\ref{eqn:cohqfi}) above then implies that the QFI of the thermal state is upper bounded by ${1\over d^{2}}O\left( {\xi \over 1-\xi} \right)$, i.e., by a linear function of the expected excitation number. This shows that Heisenberg scaling of $\rho_{d}$ is destroyed by $\Phi_{t}$ in a time that depends on the convergence rate of $\Phi_{t}(\rho_{d})$. In physical terms, the geometric distribution of excitations in a thermal state suppress the Heisenberg scaling of the QFI of its eigenvectors.

\subsection{Relation to quantum estimation in the scale model}
Lemma \ref{lem:ooo} can be extended by linearity to the state space of the massive oscillator with parameter $d$ by integrating to yield a Schrodinger-von Neumann equation with a logarithmic ``time''
\begin{eqnarray}
    {\partial \rho_{d} \over \partial \ln d}= -{i\over 4}[ia^{2}-ia^{\dagger 2},\rho_{d}].
\end{eqnarray}
Upon integrating, one finds that unlike the shift model in which the quantum state is parametrized by the additive abelian group $(\mathbb{R},+)$ according to $\rho_{x+y}=e^{-ixA}\rho_{y}e^{-ixA}$ for self-adjoint operator $A$ \cite{holevo}, the length scale estimation problem is in a scale model in which the parametrization takes the form
\begin{eqnarray}
    \rho_{dd'} = e^{-i\ln(d)A}\rho_{d'}e^{i\ln(d)A}
    \label{eqn:scalemodel}
\end{eqnarray}
with $d,d'$ in the multiplicative abelian group $(\mathbb{R}_{>0},\cdot)$ and self-adjoint operator $A={1\over 4}(ia^{2}-ia^{\dagger 2})$. This observation furnishes another proof of Theorem \ref{thm:hhh}, namely the QFI with respect to the parameter $\ln d$ is $4\text{Var}_{\ket{\psi_{n}(d)}}A$, which is in fact independent of $d$, and then the change of variables $\ln d \mapsto d$ gives $\text{QFI}(d)={4\text{Var}_{\ket{\psi_{n}(1)}}A\over d^{2}}$, which simplifies to (\ref{eqn:qfqfqf}). From this view, the $d$ dependence of the QFI arises from the reparametrization of the state manifold. The scale model (\ref{eqn:scalemodel}) allows one to consider the problem of estimation of $d$ by first considering states as parametrized by $\ln(d)$, then employing the covariant framework of Ref.\cite{PhysRevA.73.062103}, then finally transforming back to the coordinate $d$. For example, the scale invariant cost function \begin{eqnarray}
    c\left(\ln\tilde{d},\ln d\right)=-\int_{\epsilon}^{\infty}d\mu \, { \cos (\mu \vert \ln \left( {\tilde{d}\over d} \right) \vert) \over \mu}
    \label{eqn:covcost}
\end{eqnarray} (with $\epsilon>0$) is a function of $\vert \ln\left( {\tilde{d}\over d}\right)\vert$ which is both appropriate for scale estimation problems \cite{rubio2022quantum,PhysRevLett.127.190402} and allows to prove optimality of the covariant measurement.  

The optimal covariant measurement of $\ln(d)$ depends on the probe state and it is not clear how to implement the projections involved using standard readouts in quantum optics. In the next section, we focus on a more practical method for achieving the Heisenberg scaling predicted by Theorem \ref{thm:hhh} by analyzing asymptotically unbiased and efficient estimators of $d$ for arbitrary $d$. 

\section{Estimation theory\label{sec:esttheor}}

Because the length scale estimation problem involves a single parameter, the quantum Cram\'{e}r-Rao bound is saturated by a measurement consisting of eigenvectors of the SLD operator, in the sense that the classical Fisher information of this measurement is equal to the QFI. However, this fact is not particularly useful for designing a measurement to estimate $d$ because its value must be known in advance. Fortunately, for the probe states analyzed in the present work, the use of a function of $q$ as an estimator for these probe states is justified by the following theorem.
\begin{theorem}
    The classical Fisher information of the position measurement $\lbrace \ket{q=x}\bra{q=x},dx\rbrace$ for a pure probe state $\ket{\psi(d)}$ with real-valued position wavefunction is equal to $\mathrm{QFI}(d)$ for all $d>0$.
    \label{thm:ttt}
\end{theorem}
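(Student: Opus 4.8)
The plan is to reduce both the classical Fisher information of the position readout and the quantum Fisher information of $\ket{\psi(d)}$ to the same elementary integral $4\int_{\mathbb{R}}(\partial_{d}\psi(x,d))^{2}\,dx$, where $\psi(x,d):=\langle q=x\vert\psi(d)\rangle\in\mathbb{R}$. First I would note that the position eigenbasis $\lbrace\ket{q=x}\rbrace$ does not depend on $d$: on $L^{2}(\mathbb{R})$ the multiplication operator $q$ coincides with $(a+a^{\dagger})/\sqrt{2d}$ when $a,a^{\dagger}$ act on the $d$-dependent Fock basis $\lbrace\ket{\psi_{n}(d)}\rbrace$, so only the probe state, not the measurement, carries the parameter. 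Consequently $\langle q=x\vert\partial_{d}\psi(d)\rangle=\partial_{d}\psi(x,d)$, and the outcome density of the POVM $\lbrace\ket{q=x}\bra{q=x},dx\rbrace$ is $p_{d}(x)=\vert\psi(x,d)\vert^{2}=\psi(x,d)^{2}$, normalized by $\int_{\mathbb{R}}\psi(x,d)^{2}\,dx=1$.

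For the classical side, since $p_{d}=\psi^{2}$ we have $\partial_{d}p_{d}=2\psi\,\partial_{d}\psi$, hence, using (\ref{eqn:snrsnr}) in the equivalent form $\mathrm{CFI}^{(q)}(d)=\int_{\mathbb{R}}(\partial_{d}p_{d})^{2}/p_{d}\,dx$,
\begin{equation}
\mathrm{CFI}^{(q)}(d)=\int_{\mathbb{R}}\frac{4\,\psi(x,d)^{2}\,(\partial_{d}\psi(x,d))^{2}}{\psi(x,d)^{2}}\,dx=4\int_{\mathbb{R}}(\partial_{d}\psi(x,d))^{2}\,dx.
\end{equation}
For the quantum side I would invoke the Bures form (\ref{eqn:bb}), which for a pure state with real wavefunction reduces (as already used in Section \ref{sec:bench}) to $\mathrm{QFI}(d)=-4\,\partial_{d'}^{2}\langle\psi(d)\vert\psi(d')\rangle\big\vert_{d'=d}$. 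Expanding the overlap as $\int_{\mathbb{R}}\psi(x,d)\psi(x,d')\,dx$ and differentiating under the integral sign gives $\mathrm{QFI}(d)=-4\int_{\mathbb{R}}\psi\,\partial_{d}^{2}\psi\,dx$; differentiating $\int_{\mathbb{R}}\psi^{2}\,dx=1$ twice in $d$ yields $\int_{\mathbb{R}}(\partial_{d}\psi)^{2}\,dx=-\int_{\mathbb{R}}\psi\,\partial_{d}^{2}\psi\,dx$, so $\mathrm{QFI}(d)=4\int_{\mathbb{R}}(\partial_{d}\psi)^{2}\,dx=\mathrm{CFI}^{(q)}(d)$, which is the claim.

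An equivalent route, perhaps more transparent, uses the pure-state SLD $L_{d}=2(\ket{\partial_{d}\psi}\bra{\psi}+\ket{\psi}\bra{\partial_{d}\psi})$. Reality of the wavefunction gives $\langle\psi\vert\partial_{d}\psi\rangle=\int_{\mathbb{R}}\psi\,\partial_{d}\psi\,dx=\tfrac{1}{2}\partial_{d}\!\int_{\mathbb{R}}\psi^{2}\,dx=0$, so $L_{d}\ket{\psi(d)}=2\ket{\partial_{d}\psi(d)}$ and hence $\langle q=x\vert L_{d}\vert\psi(d)\rangle/\langle q=x\vert\psi(d)\rangle=2\,\partial_{d}\psi(x,d)/\psi(x,d)\in\mathbb{R}$ wherever $\psi(x,d)\neq0$. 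This is exactly the condition for a rank-one POVM to saturate the chain $\mathrm{CFI}^{(q)}(d)\le\mathrm{QFI}(d)$, so equality holds.

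The computation is routine; the only points requiring care are (i) the $d$-independence of the position POVM, so that the wavefunction of $\partial_{d}\ket{\psi(d)}$ is literally $\partial_{d}\psi$; (ii) the behavior at the nodes of $\psi$, where $\sqrt{p_{d}}=\vert\psi\vert$ fails to be differentiable but $(\partial_{d}p_{d})^{2}/p_{d}=4(\partial_{d}\psi)^{2}$ has only a removable singularity, so the integral is unaffected; and (iii) justification of differentiation under the integral sign, which follows by dominated convergence from the rapid Gaussian decay of the probe wavefunctions and of their $d$-derivatives (the latter being, by Lemma \ref{lem:ooo}, finite linear combinations of such wavefunctions when $\ket{\psi(d)}$ is a superposition of the $\ket{\psi_{n}(d)}$). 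I expect the ``obstacle'' to be the bookkeeping of these regularity points rather than anything structural.
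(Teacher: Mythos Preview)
Your proposal is correct, and your second route---computing $\langle q=x\vert L_{d}\vert\psi(d)\rangle/\langle q=x\vert\psi(d)\rangle=2\,\partial_{d}\psi/\psi\in\mathbb{R}$ and invoking the saturation criterion for rank-one POVMs---is precisely the paper's approach: the paper cites the Braunstein--Caves condition $m(x)^{1/2}L_{d}\rho_{d}^{1/2}=r(x)\,m(x)^{1/2}\rho_{d}^{1/2}$ with real $r(x)$, specializes to $m(x)=\ket{q=x}\bra{q=x}$ and $\rho_{d}^{1/2}=\rho_{d}=\ket{\psi}\bra{\psi}$, and then verifies that the resulting $r(x)$ is real. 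You simply short-circuit the kernel manipulations by first noting $\langle\psi\vert\partial_{d}\psi\rangle=0$, which lets you write $L_{d}\ket{\psi}=2\ket{\partial_{d}\psi}$ directly.

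Your first route is genuinely different and more elementary: rather than appealing to an external saturation criterion, you reduce both $\mathrm{CFI}^{(q)}(d)$ and $\mathrm{QFI}(d)$ to the common integral $4\int_{\mathbb{R}}(\partial_{d}\psi)^{2}\,dx$ by direct computation, using only the normalization identity $\int\psi^{2}=1$ and its $d$-derivatives. This buys a self-contained argument that does not require the reader to look up the Braunstein--Caves condition, and it makes the role of the real-wavefunction hypothesis fully explicit (it is what lets $p_{d}=\psi^{2}$ without a modulus and makes the overlap $\langle\psi(d)\vert\psi(d')\rangle$ real so that the absolute value in the Bures formula can be dropped). The paper's route, by contrast, packages the mechanism into a known equality condition, which is quicker to state but less transparent. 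Your attention to the removable singularities at the nodes of $\psi$ and to the $d$-independence of the position POVM are points the paper leaves implicit.
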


\begin{proof} We appeal to a result of \cite{PhysRevLett.72.3439} as stated in Eq. (28) of \cite{bng}, which provides a sufficient and necessary condition for the QFI to be equal to the classical Fisher information for estimation of a single parameter with a real-valued quantum measurement $\lbrace m(x),dx\rbrace$. Specifically, the condition is
\begin{eqnarray}
    m(x)^{1/2}L_{d}\rho_{d}^{1/2} &=& r(x)m(x)^{1/2}\rho_{d}^{1/2},
    \label{eqn:bneq}
\end{eqnarray}
almost surely with $r(x)\in\mathbb{R}$.

We take $m(x):= \ket{q=x}\bra{q=x}$. Because $m(x)$ is a limit of rank-one projections, $m(x) = m(x)^{1/2}$. Replacing matrix multiplication in the left-hand side with integration in (\ref{eqn:bneq}), it suffices to determine $r(x) \in \mathbb{R}$ such that: \begin{eqnarray}
    \int L_{d}(x,z)\rho_{d}^{1/2}(z,y) dz&=& r(x)\rho^{1/2}_{d}(x,y)
\end{eqnarray}
where we write $A(x,y):= \langle q=x\vert A\vert q=y\rangle$ for the position kernel of an operator $A$. Since $\rho_{d}$ is assumed pure, $\rho_{d}^{1/2} = \rho_{d}$ and $L_{d}=2\partial_{d}\rho_{d}$. Hence we may further simplify to:

\begin{eqnarray}
\int 2\partial_{d} \rho_d(x,z)\rho_{d}(z,y) dz &=& r(x)\rho_{d}(x,y).
\end{eqnarray}

With $\rho_{d}(x,y) := \psi_{d}(x) \psi_{d}(y)$, this becomes
\begin{eqnarray}
\int 2\partial_{d}[\psi_{d}(x)\psi_{d}(z)]\psi_{d}(z) dz &=& r(x)\psi_{d}(x).
\end{eqnarray}
Solving for $r(x)$ shows that it is real-valued where it is defined.\end{proof}

Rotating both the phase space measurement direction and the probe state, one further concludes that for pure states parametrized by the oscillator length, the classical Fisher information about the parameter $d$ for a measurement of displacement along any phase space direction is equal to $\text{QFI}(d)$ if the wavefunction is real in that phase space quadrature. For example, the classical Fisher information (for parameter $d$) of a homodyne measurement is equal to the $\text{QFI}(d)$ if and only if $\alpha$ is real-valued. 

Although Theorem \ref{thm:ttt} establishes a $q$-measurement as optimal for estimation of $d$, it does not provide an estimator. For the probe state $\ket{\psi_{n}(d)}$, we consider both the method of moments (MOM) estimator and the maximum likelihood estimator (MLE) based on independent, identical $q$ measurements; the MOM and MLE are two of the most common estimators in general \cite{lehmanntheory}. For $M$ measurements, the MOM estimator of ${1\over d}$ is given by $X_{M}:=\frac{2\sum{q_i^2}}{M(2n+1)}$, which is derived by setting the second population moment $E[q^{2}]:=\langle \psi_{n}(d) \vert q^{2}\vert \psi_{n}(d)\rangle$ to the sample second moment. From the expressions $\textrm{E}[q^2]$ and $\textrm{Var}[q^2]$ derived in Theorem \ref{thm:hhh}, the MOM estimator is an unbiased estimator of $1/d$ with asymptotic variance ${2(n^{2}+n+1)\over d^{2}(2n+1)^{2}}$. Applying the delta method \cite{lehmanntheory} with the function $X_{M}\mapsto {1\over X_{M}}$ gives an asymptotically unbiased estimator of $d$ with asymptotic variance ${2(n^{2}+n+1)d^{2}\over (2n+1)^{2}}$, which does not attain the Cram\'{e}r-Rao lower bound, i.e., the reciprocal of the $q$-measurement Fisher information (equal to the reciprocal of the QFI as per Theorem \ref{thm:ttt}), for any excitation number $n>1$. For $n=1$, the asymptotic variance coincides with the reciprocal of the QFI and Appendix \ref{sec:app1} justifies this fact by noting that equivalence between the MOM and MLE for this specific excitation number. 

In contrast to the MOM estimator, the MLE is known to be asymptotically efficient subject to regularity conditions \cite{van2000asymptotic, shao2008mathematical}. Specifically, if $\hat{d}$ is the MLE for $M$ samples of $\rho_{d}$, then $\sqrt{M}(\hat{d} - d)$ converges in distribution to $N(0, 1/\textrm{QFI}(d))$, implying that the bias goes to 0 in order less than $1/\sqrt{M}$ and the variance of $\hat{d}$ is approximately  $\frac{1}{M \times \text{QFI}(d)}$, reducing to 0 in order $1/M$. For a probe state $\ket{\psi_{n}(d)}$, the MLE is defined as the value of $d$ which maximizes the log-likelihood $l_{M}(d)$ of the probability distribution $\prod_{i=1}^{M}\langle q=q_{i}\vert \psi_{n}(d)\rangle^{2}$, namely the function
\begin{equation}
    l_{M}(d) = \frac{M}{2}\log(d) - d \sum_{i=1}^M q_i^2 + 2\sum_{i=1}^M\log\vert H_n(q_i \sqrt{d})\vert,
\end{equation}
dropping constants that don't include $d$ (also, log may correspond to any base, but usually the natural logarithm is used). In simulations, we verified that a numerical MLE, with initial guess starting at the MOM, and the MOM estimator are asymptotically unbiased, and that the variance of the numerical MLE approaches the quantum Cram\'{e}r-Rao  bound, whereas the variance of the MOM does not scale with excitation number, $n$. 

Our investigation has been for the general excited state, $\ket{\psi_{n}(d)}$, and the probability density function (PDF) for $q$ is not in the exponential family of distributions for $n \geq 2$. However, the $n = 0$ case is trivially Gaussian and the $n = 1$ case can also be shown to be in the exponential family of distributions. We examine some further properties of the $n = 1$ excited state in Appendix \ref{sec:app1}, including a Bayesian posterior.

\section{Length scale estimation with multiple massive oscillators\label{sec:exc}}
In the shift model of estimation of a single-parameter SU(2) rotation acting on $(\mathbb{C}^{2})^{\otimes n}$, the fact that entangled states provide the optimal probes is a simple consequence of the fact that an equally-weighted superposition of the highest and lowest weight vectors for the rotation generator occurs only for states that are entangled across any bipartition of the $n$ registers.

Outside of the shift model, it becomes unclear whether entangled states provide an advantage because there is not necessarily a generator associated with the parameter. In this case, one strategy is to consider how entanglement-generating unitaries affect the parametrization of a probe state. In the case of estimation of $d$ with two-mode entangled Gaussian states, the Euler decomposition implies that a probe state has the form $U\ket{\psi(d)}_{A}\ket{\phi(d)}_{B}$ where $U$ acts as an element of $U(2)\cong O(4)\cap Sp(4,\mathbb{R})$ and $\ket{\psi(d)}$ and $\ket{\phi(d)}$ are single-oscillator Gaussian states. But the Cartan decomposition of $U(2)$ allows to simplify $U$ as a product of a unitary that is generated by $q_{A}p_{B}$, $p_{A}q_{B}$ and another unitary which applies local phase shifts to $\ket{\psi(d)}\ket{\phi(d)}$. Since $q_{A}p_{B}$, $p_{A}q_{B}$ are independent of $d$, entanglement-generating linear optical Gaussian transformations cannot increase the $\text{QFI}(d)$ of an optimal product Gaussian state.

On the other hand, for general non-Gaussian probe states it is instructive to analyze the QFI for  examples of entangled states of energy $n+m$ constructed from the states $\ket{\psi_{n}(d)}$ and $\ket{\psi_{m}(d)}$, and compare to the QFI of the product state $\ket{\psi_{n}(d)}\ket{\psi_{m}(d)}$. The states
\begin{align}
&{} {\mathbb{I}\pm\text{SWAP} \over \sqrt{2}}\ket{\psi_{n}(d)}\ket{\psi_{m}(d)} \label{eqn:sub1}\\
&{}    {\ket{\psi_{n}(d)}^{\otimes 2} \pm \ket{\psi_{m}(d)}^{\otimes 2} \over \sqrt{2}}
\label{eqn:sub2}
\end{align}
are entangled states of two massive oscillators of total energy $n+m$ and are motivated by the fact that they are maximally entangled in the subspace $\text{span}_{\mathbb{C}}\lbrace \ket{\psi_{n}(d)},\ket{\psi_{m}(d)}\rbrace$. Using Lemma \ref{lem:ooo}, it is straightforward to show that for $\vert n-m\vert >2$, the QFI of these probe states is the sum of the values in (\ref{eqn:qfqfqf}) for $\ket{\psi_{n}(d)}$ and $\ket{\psi_{m}(d)}$, and is therefore equal to the QFI of the product state $\ket{\psi_{n}(d)}\ket{\psi_{m}(d)}$. Note that the single oscillator state $\ket{\psi_{m+n}(d)}$ has QFI greater than the sum of the values in (\ref{eqn:qfqfqf}) for the probe states $\ket{\psi_{n}(d)}$ and $\ket{\psi_{m}(d)}$ when $m,n\neq 0$. However, comparison of the QFI of the states (\ref{eqn:sub2}) to the QFI of $\ket{\psi_{m+n}(d)}$ does not appear to be meaningful because $\ket{\psi_{m+n}(d)}$ describes a different number of modes and has  less energy than $\ket{\psi_{n}(d)}\ket{\psi_{m}(d)}$ (appending a second mode comes with its zero-point energy). Considering instead $\vert n-m\vert =2$ in the states in (\ref{eqn:sub2}), there is a entanglement advantage for length scale estimation. For example, with $m=n\pm 2$ one can consider the minus sign in (\ref{eqn:sub2}) or the plus sign in (\ref{eqn:sub1})  and obtain an $O(n^{2})$ additive increase in the QFI compared to $\ket{\psi_{n}(d)}\ket{\psi_{n\pm 2}(d)}$. To extending the form of these entangled states to states having amplitude centered on a high excitation number, one can consider a sequence of excitation numbers $n(\ell)$ with $\ell\in \mathbb{N}$. Then if $n(\ell)>2\ell$, the additive increase in the QFI for the sequence of entangled states \begin{equation}\ket{\phi^{(\ell)}(d)}:={1\over \sqrt{2\ell+1}}\sum_{j=-\ell}^{\ell}(-1)^{\delta_{j,0}}\ket{\psi_{n(\ell)+2j}(d)}^{\otimes 2}\label{eqn:entseq}\end{equation} over the QFI of $\ket{\psi_{n(\ell)}(d)}^{\otimes 2}$ has the form ${2\ell n(\ell)^{2}\over 2d^{2}(2\ell+1)} + o(n(\ell)^{2})$. It then follows from Theorem \ref{thm:hhh} that compared to the sequence of states $\ket{\psi_{n(\ell)}(d)}^{\otimes 2}$, bipartite entanglement asymptotically improves the sensitivity as much as appending an additional mode prepared in $\ket{\psi_{n(\ell)}(d)}$.

The fact that the excited states $\ket{\psi_{n}(d)}$ allow to achieve Heisenberg scaling and that entangled states provide a further advantage in achievable precision motivates the question of existence of a probe state of a system of $N$ massive oscillators that allows to achieve the global Heisenberg scaling $O\left( {N^{2}n^{2}\over d^{2}}\right)$ of the QFI, i.e., QFI scaling as the square of the total energy of the system. From additivity of the QFI under tensor products, one expects such probe states to be multipartite entangled analogously to Greenberger-Horne-Zeilinger (GHZ) states. As in the problem of multimode optical phase estimation with states of fixed total photon number \cite{PhysRevLett.111.070403}, we demand that the probe state be symmetric. However, we do not demand that the state be a superposition of states of equal total excitation number because the relevant superselection rule for a system of $N$ massive oscillators is simply the total mass $Nm$. In other words, the excitations can be considered as internal degrees of freedom of the system subject to, e.g., Rabi dynamics. A naive generalization of, e.g., (\ref{eqn:sub2}) with $m=n+2$, to form the states $\ket{n}^{\otimes N} + \ket{n+2}^{\otimes N}$ (normalization omitted) fails because, although a small change in $d$ locally perturbs the $\ket{n}$ and $\ket{n+2}$ states, the global state does not change with a rate $O(N^{2})$, which is the maximal scaling that one would expect occur for a macroscopic superposition state \cite{PhysRevA.89.012122}. A probe state with the desired behavior can be constructed by taking a superposition of two states that incur an amplitude change as $d$ changes. To this end, consider the state
\begin{equation}
{1\over \sqrt{2}}\left( \left( {\ket{n}+i\ket{n+2}\over \sqrt{2}}\right)^{\otimes N}+\left( {\ket{n}-i\ket{n+2}\over \sqrt{2}}\right)^{\otimes N} \right)
\label{eqn:ghzghz}
\end{equation}
which is a superposition of two states of $N$ massive oscillators with equal expected excitation number $N(n+1)$. Note that the corresponding wavefunction on $\mathbb{R}^{N}$ is real, which allows one to leverage Theorem \ref{thm:ttt} in finding an optimal readout. A global change in $\text{ln}d$ is generated by the operator $\sum_{j=1}^{N}ia^{2}_{j} - ia^{\dagger 2}_{j}$ which, when projected to the two-dimensional subspace spanned by $\ket{n_{\pm}}:={\ket{n}\pm \ket{n+2}\over \sqrt{2}}$, takes the form $\sqrt{(n+2)(n+1)}\sum_{j=1}^{N}Z_{j}$, with $Z_{j}$ the Pauli $Z$ operator on the $j$-th mode (and identity elsewhere). By taking the variance of this operator, then transforming back to the coordinate $d$ from $\ln d$, it is straightforward to verify that the probe state (\ref{eqn:ghzghz}) has QFI scaling as $\text{QFI}(d)=O\left( {N^{2}n^{2}\over d^{2}}\right)$.

\section{Experimental directions and challenges\label{sec:ti}}

Our result in Theorem \ref{thm:hhh} and subsequent discussion of measurement readouts indicates that an experiment combining a protocol for preparation of Fock states $\ket{\psi_{n}(d)}$ of a system of massive oscillators with a protocol for finite-energy Bell measurements (\ref{eqn:bell}) is sufficient for realizing Heisenberg scaling in precision for an estimate of $d$. Trapped ions provide a system in which substantial experimental progress has been made toward preparing non-classical states of massive oscillators for use in quantum estimation protocols. Specifically, in a one-dimensional system of trapped ions cooled into the phonon-mediated interaction regime, the magnitude of the spin-dependent force is proportional to the length scale characterizing the phonon oscillator because the spin operators couple to the canonical operator $q$ representing displacement of the phonon field  \cite{PhysRevLett.122.030501}. This spin-dependent force is the mechanism for implementation of multiqubit gates when using the trapped ions for quantum computation \cite{PhysRevLett.82.1835,PhysRevLett.74.4091}. Estimates of $d$ obtained from a quantum estimation protocol would be used in pulse sequences that implement multiqubit gates on the spin degrees of freedom. Therefore, lowering the variance of the estimates toward the quantum Cram\'{e}r-Rao bounds determined by the $\text{QFI}(d)$ in Sections \ref{sec:bench}, \ref{sec:hs}, \ref{sec:exc} is expected to increase the multiqubit gate fidelities for quantum computation with trapped ions.

Recent experiments in trapped ion systems have generated $n=O(1)$ Fock states \cite{Um2016} for the purpose of trap frequency and displacement estimation \cite{Wolf2019}. In a trapped ion system, phase space characterization of non-classical states has been achieved by tomographic reconstruction of the Wigner quasiprobability function \cite{PhysRevLett.125.043602} and characterization of Fock space support has been achieved with number-resolving phonon measurement \cite{PhysRevLett.131.223603}. Phonon squeezing operations implemented by parametric  amplification have been proposed for improving the multi-qubit gate fidelity and spin-squeezing properties of ensembles of trapped ions \cite{PhysRevLett.122.030501,PRXQuantum.4.030311,PhysRevLett.132.163601,PhysRevLett.129.063603,PhysRevA.93.013415}. This proposal has been implemented with a single-ion mechanical oscillator, achieving displacement estimation with precision exceeding the standard quantum limit \cite{wineland}, and progress has been made toward demonstrating analogous results in larger trapped ion systems \cite{PhysRevA.107.032425}. Along these lines of research, demonstrating the entangled probe states in Section \ref{sec:exc} and their associated readouts provide challenges for scaling continuous-variable quantum information processing based on massive quantum oscillators to larger system sizes and composite systems. In trapped ion systems, progress in this direction has been made by utilizing optical Bell measurements to generate entanglement between ion systems \cite{Hucul2015} and by implementing mixed-species entangling gates \cite{PhysRevLett.125.080504}. Proposals also exist for entangled state and phonon Fock state generation in optomechanical systems \cite{Clarke_2020,PhysRevA.99.043836}.

In quantum computation and communication, quantum error correcting codes provide fault-tolerance necessary to overcome decoherence during the computation. In the context of continuous-variable systems,
Gottesman-Kitaev-Preskill (GKP) codes provide a versatile quantum error correction scheme \cite{PhysRevA.64.012310}, and a method for preparing phonon GKP code states has been demonstrated in a single-ion system \cite{PhysRevLett.133.050602}. Recently developed sequential squeezing protocols \cite{PRXQuantum.5.020314} could also be useful for generating such code states. To get accurate readouts for displacement errors, and therefore
a higher recovery rate, motional dephasing should be characterized during the pulse sequences of interest. This motional dephasing can arise from electronic noise in the radio-frequency trap, which can be related to a fluctuating oscillator length scale parameter. Since
the low phonon occupation GKP code states are not expected to be useful for high-precision length scale estimation, one could consider dedicating a trapped ion subsystem to relatively high occupation states considered in the previous sections. This hypothetical system could support fault-tolerant computation (by error correction) and noise characterization (by length scale estimation) in a trapped ion framework.

\section{Discussion}

Our results on advantages in quantum estimation of the length scale parameter with non-classical and entangled massive oscillators motivate further development of quantum-enhanced sensing protocols with continuous variables quantum systems. The length scale, or, more generally, the complex structure, initially appears to be a non-dynamical parameter for massive quantum continuous-variable systems. Indeed, some intuition from the standard Gaussian estimation paradigm such as squeezing enhancement of sensitivity is not applicable because the generator does not couple to the length scale. However, the derivatives of the wavefunctions of the excitations shows that $\ln d$ can be considered as a dynamical parameter, allowing one to leverage standard techniques from quantum estimation theory, at least to derive bounds on the ultimate attainable precision. In terms of experimental requirements, the structure of the states (\ref{eqn:sub1}), (\ref{eqn:sub2}), (\ref{eqn:ghzghz}) indicate that coherent control in a two-dimensional subspace of even or odd parity excitations is sufficient for realizing our predictions.

In the present work, we did not consider estimation of multiple length scale parameters that may be present in a heterogeneous system of massive quantum oscillators. Our construction of GHZ-like states that provide global Heisenberg scaling of QFI suggests that estimation of a function of the different length scales can be analyzed along the lines of discrete variable distributed sensing \cite{PhysRevLett.120.080501} with adaptive maximum likelihood estimation providing a possible route for achieving optimal sensitivity \cite{fujiwara}.

Dynamics of interacting massive oscillators are important targets for digital quantum simulation algorithms. The present work indicates several targets for wavefunction preparation algorithms that could be defined in the continuous-variable domain and simulated digitally.  We note that when the normalization factor of a discretized Gaussian wavefunction of a massive quantum oscillator can be efficiently computed, an approximation of the wavefunction can be prepared on a digital quantum computer by an algorithm with depth $O(\log{1\over \epsilon})$ where $\epsilon$ is the desired precision \cite{grovrud}. Further, the wavefunction of the same state (as a sequence in $\ell^{2}(\mathbb{C})$), but corresponding to a different mass, can be prepared using the linear resampling algorithm in Ref.\cite{kitaev}. Such protocols can be used to prepare initial states for simulations of quantum dynamics that generate the states analyzed in this work.

\begin{acknowledgements} 
This work was supported in part by the Laboratory Directed Research and Development program of Los Alamos National Laboratory, via an Information Science and Technology Institute (ISTI) rapid response award. T.V. acknowledges additional support from the National Quantum Information Science Research Centers and the Quantum Science Center, and the LANL ASC Beyond Moore’s Law project.  

\end{acknowledgements}

\bibliography{refs2.bib}

\begin{thebibliography}{70}%
\makeatletter
\providecommand \@ifxundefined [1]{%
 \@ifx{#1\undefined}
}%
\providecommand \@ifnum [1]{%
 \ifnum #1\expandafter \@firstoftwo
 \else \expandafter \@secondoftwo
 \fi
}%
\providecommand \@ifx [1]{%
 \ifx #1\expandafter \@firstoftwo
 \else \expandafter \@secondoftwo
 \fi
}%
\providecommand \natexlab [1]{#1}%
\providecommand \enquote  [1]{``#1''}%
\providecommand \bibnamefont  [1]{#1}%
\providecommand \bibfnamefont [1]{#1}%
\providecommand \citenamefont [1]{#1}%
\providecommand \href@noop [0]{\@secondoftwo}%
\providecommand \href [0]{\begingroup \@sanitize@url \@href}%
\providecommand \@href[1]{\@@startlink{#1}\@@href}%
\providecommand \@@href[1]{\endgroup#1\@@endlink}%
\providecommand \@sanitize@url [0]{\catcode `\\12\catcode `\$12\catcode
  `\&12\catcode `\#12\catcode `\^12\catcode `\_12\catcode `\%12\relax}%
\providecommand \@@startlink[1]{}%
\providecommand \@@endlink[0]{}%
\providecommand \url  [0]{\begingroup\@sanitize@url \@url }%
\providecommand \@url [1]{\endgroup\@href {#1}{\urlprefix }}%
\providecommand \urlprefix  [0]{URL }%
\providecommand \Eprint [0]{\href }%
\providecommand \doibase [0]{https://doi.org/}%
\providecommand \selectlanguage [0]{\@gobble}%
\providecommand \bibinfo  [0]{\@secondoftwo}%
\providecommand \bibfield  [0]{\@secondoftwo}%
\providecommand \translation [1]{[#1]}%
\providecommand \BibitemOpen [0]{}%
\providecommand \bibitemStop [0]{}%
\providecommand \bibitemNoStop [0]{.\EOS\space}%
\providecommand \EOS [0]{\spacefactor3000\relax}%
\providecommand \BibitemShut  [1]{\csname bibitem#1\endcsname}%
\let\auto@bib@innerbib\@empty
\bibitem [{\citenamefont {Moshinsky}(1962)}]{MOSHINSKY1962384}%
  \BibitemOpen
  \bibfield  {author} {\bibinfo {author} {\bibfnamefont {M.}~\bibnamefont
  {Moshinsky}},\ }\bibfield  {title} {\bibinfo {title} {The harmonic oscillator
  and supermultiplet theory: (i) the single shell picture},\ }\href
  {https://doi.org/https://doi.org/10.1016/0029-5582(62)90758-7} {\bibfield
  {journal} {\bibinfo  {journal} {Nuclear Physics}\ }\textbf {\bibinfo {volume}
  {31}},\ \bibinfo {pages} {384} (\bibinfo {year} {1962})}\BibitemShut
  {NoStop}%
\bibitem [{\citenamefont {Élliott}\ \emph {et~al.}(2005)\citenamefont
  {Élliott}, \citenamefont {Seiringer}, \citenamefont {Solovej},\ and\
  \citenamefont {Yngvason}}]{lieb}%
  \BibitemOpen
  \bibfield  {author} {\bibinfo {author} {\bibfnamefont {L.}~\bibnamefont
  {Élliott}}, \bibinfo {author} {\bibfnamefont {R.}~\bibnamefont {Seiringer}},
  \bibinfo {author} {\bibfnamefont {J.}~\bibnamefont {Solovej}},\ and\ \bibinfo
  {author} {\bibfnamefont {J.}~\bibnamefont {Yngvason}},\ }\href
  {https://doi.org/10.1007/b137508} {\emph {\bibinfo {title} {The mathematics
  of the {Bose} gas and its condensation}}}\ (\bibinfo  {publisher}
  {Birkh\"{a}user Verlag},\ \bibinfo {year} {2005})\BibitemShut {NoStop}%
\bibitem [{\citenamefont {Wineland}\ and\ \citenamefont
  {Leibfried}(2011)}]{winewine}%
  \BibitemOpen
  \bibfield  {author} {\bibinfo {author} {\bibfnamefont {D.}~\bibnamefont
  {Wineland}}\ and\ \bibinfo {author} {\bibfnamefont {D.}~\bibnamefont
  {Leibfried}},\ }\bibfield  {title} {\bibinfo {title} {Quantum information
  processing and metrology with trapped ions},\ }\href
  {https://doi.org/https://doi.org/10.1002/lapl.201010125} {\bibfield
  {journal} {\bibinfo  {journal} {Laser Physics Letters}\ }\textbf {\bibinfo
  {volume} {8}},\ \bibinfo {pages} {175} (\bibinfo {year} {2011})}\BibitemShut
  {NoStop}%
\bibitem [{\citenamefont {Gr{\"o}blacher}\ \emph {et~al.}(2009)\citenamefont
  {Gr{\"o}blacher}, \citenamefont {Hammerer}, \citenamefont {Vanner},\ and\
  \citenamefont {Aspelmeyer}}]{Groblacher2009}%
  \BibitemOpen
  \bibfield  {author} {\bibinfo {author} {\bibfnamefont {S.}~\bibnamefont
  {Gr{\"o}blacher}}, \bibinfo {author} {\bibfnamefont {K.}~\bibnamefont
  {Hammerer}}, \bibinfo {author} {\bibfnamefont {M.~R.}\ \bibnamefont
  {Vanner}},\ and\ \bibinfo {author} {\bibfnamefont {M.}~\bibnamefont
  {Aspelmeyer}},\ }\bibfield  {title} {\bibinfo {title} {Observation of strong
  coupling between a micromechanical resonator and an optical cavity field},\
  }\href {https://doi.org/10.1038/nature08171} {\bibfield  {journal} {\bibinfo
  {journal} {Nature}\ }\textbf {\bibinfo {volume} {460}},\ \bibinfo {pages}
  {724} (\bibinfo {year} {2009})}\BibitemShut {NoStop}%
\bibitem [{\citenamefont {Samanta}\ \emph {et~al.}(2023)\citenamefont
  {Samanta}, \citenamefont {De~Bonis}, \citenamefont {M{\o}ller}, \citenamefont
  {Tormo-Queralt}, \citenamefont {Yang}, \citenamefont {Urgell}, \citenamefont
  {Stamenic}, \citenamefont {Thibeault}, \citenamefont {Jin}, \citenamefont
  {Czaplewski}, \citenamefont {Pistolesi},\ and\ \citenamefont
  {Bachtold}}]{Samanta2023}%
  \BibitemOpen
  \bibfield  {author} {\bibinfo {author} {\bibfnamefont {C.}~\bibnamefont
  {Samanta}}, \bibinfo {author} {\bibfnamefont {S.~L.}\ \bibnamefont
  {De~Bonis}}, \bibinfo {author} {\bibfnamefont {C.~B.}\ \bibnamefont
  {M{\o}ller}}, \bibinfo {author} {\bibfnamefont {R.}~\bibnamefont
  {Tormo-Queralt}}, \bibinfo {author} {\bibfnamefont {W.}~\bibnamefont {Yang}},
  \bibinfo {author} {\bibfnamefont {C.}~\bibnamefont {Urgell}}, \bibinfo
  {author} {\bibfnamefont {B.}~\bibnamefont {Stamenic}}, \bibinfo {author}
  {\bibfnamefont {B.}~\bibnamefont {Thibeault}}, \bibinfo {author}
  {\bibfnamefont {Y.}~\bibnamefont {Jin}}, \bibinfo {author} {\bibfnamefont
  {D.~A.}\ \bibnamefont {Czaplewski}}, \bibinfo {author} {\bibfnamefont
  {F.}~\bibnamefont {Pistolesi}},\ and\ \bibinfo {author} {\bibfnamefont
  {A.}~\bibnamefont {Bachtold}},\ }\bibfield  {title} {\bibinfo {title}
  {Nonlinear nanomechanical resonators approaching the quantum ground state},\
  }\href {https://doi.org/10.1038/s41567-023-02065-9} {\bibfield  {journal}
  {\bibinfo  {journal} {Nature Physics}\ }\textbf {\bibinfo {volume} {19}},\
  \bibinfo {pages} {1340} (\bibinfo {year} {2023})}\BibitemShut {NoStop}%
\bibitem [{\citenamefont {Rips}\ \emph {et~al.}(2014)\citenamefont {Rips},
  \citenamefont {Wilson-Rae},\ and\ \citenamefont
  {Hartmann}}]{PhysRevA.89.013854}%
  \BibitemOpen
  \bibfield  {author} {\bibinfo {author} {\bibfnamefont {S.}~\bibnamefont
  {Rips}}, \bibinfo {author} {\bibfnamefont {I.}~\bibnamefont {Wilson-Rae}},\
  and\ \bibinfo {author} {\bibfnamefont {M.~J.}\ \bibnamefont {Hartmann}},\
  }\bibfield  {title} {\bibinfo {title} {Nonlinear nanomechanical resonators
  for quantum optoelectromechanics},\ }\href
  {https://doi.org/10.1103/PhysRevA.89.013854} {\bibfield  {journal} {\bibinfo
  {journal} {Phys. Rev. A}\ }\textbf {\bibinfo {volume} {89}},\ \bibinfo
  {pages} {013854} (\bibinfo {year} {2014})}\BibitemShut {NoStop}%
\bibitem [{\citenamefont {Rips}\ \emph {et~al.}(2012)\citenamefont {Rips},
  \citenamefont {Kiffner}, \citenamefont {Wilson-Rae},\ and\ \citenamefont
  {Hartmann}}]{Rips_2012}%
  \BibitemOpen
  \bibfield  {author} {\bibinfo {author} {\bibfnamefont {S.}~\bibnamefont
  {Rips}}, \bibinfo {author} {\bibfnamefont {M.}~\bibnamefont {Kiffner}},
  \bibinfo {author} {\bibfnamefont {I.}~\bibnamefont {Wilson-Rae}},\ and\
  \bibinfo {author} {\bibfnamefont {M.~J.}\ \bibnamefont {Hartmann}},\
  }\bibfield  {title} {\bibinfo {title} {Steady-state negative wigner functions
  of nonlinear nanomechanical oscillators},\ }\href
  {https://doi.org/10.1088/1367-2630/14/2/023042} {\bibfield  {journal}
  {\bibinfo  {journal} {New Journal of Physics}\ }\textbf {\bibinfo {volume}
  {14}},\ \bibinfo {pages} {023042} (\bibinfo {year} {2012})}\BibitemShut
  {NoStop}%
\bibitem [{\citenamefont {Banerjee}\ \emph {et~al.}(2023)\citenamefont
  {Banerjee}, \citenamefont {Kalita},\ and\ \citenamefont
  {Sarma}}]{Banerjee:23}%
  \BibitemOpen
  \bibfield  {author} {\bibinfo {author} {\bibfnamefont {P.}~\bibnamefont
  {Banerjee}}, \bibinfo {author} {\bibfnamefont {S.}~\bibnamefont {Kalita}},\
  and\ \bibinfo {author} {\bibfnamefont {A.~K.}\ \bibnamefont {Sarma}},\
  }\bibfield  {title} {\bibinfo {title} {{Robust mechanical squeezing beyond 3
  dB in a quadratically coupled optomechanical system}},\ }\href
  {https://doi.org/10.1364/JOSAB.483944} {\bibfield  {journal} {\bibinfo
  {journal} {J. Opt. Soc. Am. B}\ }\textbf {\bibinfo {volume} {40}},\ \bibinfo
  {pages} {1398} (\bibinfo {year} {2023})}\BibitemShut {NoStop}%
\bibitem [{\citenamefont {Choi}\ \emph {et~al.}(2024)\citenamefont {Choi},
  \citenamefont {Pluchar}, \citenamefont {He}, \citenamefont {Guha},\ and\
  \citenamefont {Wilson}}]{choi2024quantumlimitedimagingnanomechanical}%
  \BibitemOpen
  \bibfield  {author} {\bibinfo {author} {\bibfnamefont {M.}~\bibnamefont
  {Choi}}, \bibinfo {author} {\bibfnamefont {C.}~\bibnamefont {Pluchar}},
  \bibinfo {author} {\bibfnamefont {W.}~\bibnamefont {He}}, \bibinfo {author}
  {\bibfnamefont {S.}~\bibnamefont {Guha}},\ and\ \bibinfo {author}
  {\bibfnamefont {D.}~\bibnamefont {Wilson}},\ }\href
  {https://arxiv.org/abs/2411.04980} {\bibinfo {title} {Quantum limited imaging
  of a nanomechanical resonator with a spatial mode sorter}} (\bibinfo {year}
  {2024}),\ \Eprint {https://arxiv.org/abs/2411.04980} {arXiv:2411.04980
  [quant-ph]} \BibitemShut {NoStop}%
\bibitem [{\citenamefont {Tang}\ \emph {et~al.}(2022)\citenamefont {Tang},
  \citenamefont {Cai}, \citenamefont {Cheng}, \citenamefont {Xu}, \citenamefont
  {Peng}, \citenamefont {Chen}, \citenamefont {Wang}, \citenamefont {Xia},
  \citenamefont {Wang}, \citenamefont {Song}, \citenamefont {Zhou},\ and\
  \citenamefont {Deng}}]{TANG2022127966}%
  \BibitemOpen
  \bibfield  {author} {\bibinfo {author} {\bibfnamefont {J.-D.}\ \bibnamefont
  {Tang}}, \bibinfo {author} {\bibfnamefont {Q.-Z.}\ \bibnamefont {Cai}},
  \bibinfo {author} {\bibfnamefont {Z.-D.}\ \bibnamefont {Cheng}}, \bibinfo
  {author} {\bibfnamefont {N.}~\bibnamefont {Xu}}, \bibinfo {author}
  {\bibfnamefont {G.-Y.}\ \bibnamefont {Peng}}, \bibinfo {author}
  {\bibfnamefont {P.-Q.}\ \bibnamefont {Chen}}, \bibinfo {author}
  {\bibfnamefont {D.-G.}\ \bibnamefont {Wang}}, \bibinfo {author}
  {\bibfnamefont {Z.-W.}\ \bibnamefont {Xia}}, \bibinfo {author} {\bibfnamefont
  {Y.}~\bibnamefont {Wang}}, \bibinfo {author} {\bibfnamefont {H.-Z.}\
  \bibnamefont {Song}}, \bibinfo {author} {\bibfnamefont {Q.}~\bibnamefont
  {Zhou}},\ and\ \bibinfo {author} {\bibfnamefont {G.-W.}\ \bibnamefont
  {Deng}},\ }\bibfield  {title} {\bibinfo {title} {A perspective on quantum
  entanglement in optomechanical systems},\ }\href
  {https://doi.org/https://doi.org/10.1016/j.physleta.2022.127966} {\bibfield
  {journal} {\bibinfo  {journal} {Physics Letters A}\ }\textbf {\bibinfo
  {volume} {429}},\ \bibinfo {pages} {127966} (\bibinfo {year}
  {2022})}\BibitemShut {NoStop}%
\bibitem [{\citenamefont {Sarma}\ \emph {et~al.}(2021)\citenamefont {Sarma},
  \citenamefont {Chakraborty},\ and\ \citenamefont
  {Kalita}}]{10.1116/5.0022349}%
  \BibitemOpen
  \bibfield  {author} {\bibinfo {author} {\bibfnamefont {A.~K.}\ \bibnamefont
  {Sarma}}, \bibinfo {author} {\bibfnamefont {S.}~\bibnamefont {Chakraborty}},\
  and\ \bibinfo {author} {\bibfnamefont {S.}~\bibnamefont {Kalita}},\
  }\bibfield  {title} {\bibinfo {title} {Continuous variable quantum
  entanglement in optomechanical systems: A short review},\ }\href
  {https://doi.org/10.1116/5.0022349} {\bibfield  {journal} {\bibinfo
  {journal} {AVS Quantum Science}\ }\textbf {\bibinfo {volume} {3}},\ \bibinfo
  {pages} {015901} (\bibinfo {year} {2021})},\ \Eprint
  {https://arxiv.org/abs/https://pubs.aip.org/avs/aqs/article-pdf/doi/10.1116/5.0022349/14572089/015901\_1\_online.pdf}
  {https://pubs.aip.org/avs/aqs/article-pdf/doi/10.1116/5.0022349/14572089/015901\_1\_online.pdf}
  \BibitemShut {NoStop}%
\bibitem [{\citenamefont {Woolley}\ \emph {et~al.}(2008)\citenamefont
  {Woolley}, \citenamefont {Milburn},\ and\ \citenamefont
  {Caves}}]{Woolley_2008}%
  \BibitemOpen
  \bibfield  {author} {\bibinfo {author} {\bibfnamefont {M.~J.}\ \bibnamefont
  {Woolley}}, \bibinfo {author} {\bibfnamefont {G.~J.}\ \bibnamefont
  {Milburn}},\ and\ \bibinfo {author} {\bibfnamefont {C.~M.}\ \bibnamefont
  {Caves}},\ }\bibfield  {title} {\bibinfo {title} {Nonlinear quantum metrology
  using coupled nanomechanical resonators},\ }\href
  {https://doi.org/10.1088/1367-2630/10/12/125018} {\bibfield  {journal}
  {\bibinfo  {journal} {New Journal of Physics}\ }\textbf {\bibinfo {volume}
  {10}},\ \bibinfo {pages} {125018} (\bibinfo {year} {2008})}\BibitemShut
  {NoStop}%
\bibitem [{\citenamefont {Rahman}\ \emph {et~al.}(2024)\citenamefont {Rahman},
  \citenamefont {Kladarić}, \citenamefont {Kern}, \citenamefont {Chu},
  \citenamefont {Filip},\ and\ \citenamefont {Fadel}}]{rahman2024}%
  \BibitemOpen
  \bibfield  {author} {\bibinfo {author} {\bibfnamefont {Q.~R.}\ \bibnamefont
  {Rahman}}, \bibinfo {author} {\bibfnamefont {I.}~\bibnamefont {Kladarić}},
  \bibinfo {author} {\bibfnamefont {M.-E.}\ \bibnamefont {Kern}}, \bibinfo
  {author} {\bibfnamefont {Y.}~\bibnamefont {Chu}}, \bibinfo {author}
  {\bibfnamefont {R.}~\bibnamefont {Filip}},\ and\ \bibinfo {author}
  {\bibfnamefont {M.}~\bibnamefont {Fadel}},\ }\href
  {https://arxiv.org/abs/2412.20971} {\bibinfo {title} {{Genuine Quantum
  non-Gaussianity and metrological sensitivity of Fock states prepared in a
  mechanical resonator}}} (\bibinfo {year} {2024}),\ \Eprint
  {https://arxiv.org/abs/2412.20971} {arXiv:2412.20971 [quant-ph]} \BibitemShut
  {NoStop}%
\bibitem [{\citenamefont {Grishchuk}\ and\ \citenamefont
  {Sazhin}(1983)}]{osti_5123931}%
  \BibitemOpen
  \bibfield  {author} {\bibinfo {author} {\bibfnamefont {L.~P.}\ \bibnamefont
  {Grishchuk}}\ and\ \bibinfo {author} {\bibfnamefont {M.~V.}\ \bibnamefont
  {Sazhin}},\ }\bibfield  {title} {\bibinfo {title} {Squeezed quantum states of
  a harmonic oscillator in the problem of gravitational-wave detection},\
  }\bibfield  {journal} {\bibinfo  {journal} {Sov. Phys. - JETP (Engl.
  Transl.); (United States)}\ }\textbf {\bibinfo {volume} {57:6}},\ \href
  {https://www.osti.gov/biblio/5123931} {} (\bibinfo {year} {1983})\BibitemShut
  {NoStop}%
\bibitem [{\citenamefont {Whittle}\ \emph {et~al.}(2021)\citenamefont {Whittle}
  \emph {et~al.}}]{lll}%
  \BibitemOpen
  \bibfield  {author} {\bibinfo {author} {\bibfnamefont {C.}~\bibnamefont
  {Whittle}} \emph {et~al.},\ }\bibfield  {title} {\bibinfo {title}
  {Approaching the motional ground state of a 10-kg object},\ }\href
  {https://doi.org/10.1126/science.abh2634} {\bibfield  {journal} {\bibinfo
  {journal} {Science}\ }\textbf {\bibinfo {volume} {372}},\ \bibinfo {pages}
  {1333} (\bibinfo {year} {2021})}\BibitemShut {NoStop}%
\bibitem [{\citenamefont {Khalili}\ \emph {et~al.}(2010)\citenamefont
  {Khalili}, \citenamefont {Danilishin}, \citenamefont {Miao}, \citenamefont
  {M\"uller-Ebhardt}, \citenamefont {Yang},\ and\ \citenamefont
  {Chen}}]{PhysRevLett.105.070403}%
  \BibitemOpen
  \bibfield  {author} {\bibinfo {author} {\bibfnamefont {F.}~\bibnamefont
  {Khalili}}, \bibinfo {author} {\bibfnamefont {S.}~\bibnamefont {Danilishin}},
  \bibinfo {author} {\bibfnamefont {H.}~\bibnamefont {Miao}}, \bibinfo {author}
  {\bibfnamefont {H.}~\bibnamefont {M\"uller-Ebhardt}}, \bibinfo {author}
  {\bibfnamefont {H.}~\bibnamefont {Yang}},\ and\ \bibinfo {author}
  {\bibfnamefont {Y.}~\bibnamefont {Chen}},\ }\bibfield  {title} {\bibinfo
  {title} {{Preparing a Mechanical Oscillator in Non-Gaussian Quantum
  States}},\ }\href {https://doi.org/10.1103/PhysRevLett.105.070403} {\bibfield
   {journal} {\bibinfo  {journal} {Phys. Rev. Lett.}\ }\textbf {\bibinfo
  {volume} {105}},\ \bibinfo {pages} {070403} (\bibinfo {year}
  {2010})}\BibitemShut {NoStop}%
\bibitem [{\citenamefont {Mandel}\ and\ \citenamefont
  {Wolf}(1995)}]{Mandel_Wolf_1995}%
  \BibitemOpen
  \bibfield  {author} {\bibinfo {author} {\bibfnamefont {L.}~\bibnamefont
  {Mandel}}\ and\ \bibinfo {author} {\bibfnamefont {E.}~\bibnamefont {Wolf}},\
  }\href@noop {} {\emph {\bibinfo {title} {Optical Coherence and Quantum
  Optics}}}\ (\bibinfo  {publisher} {Cambridge University Press},\ \bibinfo
  {year} {1995})\BibitemShut {NoStop}%
\bibitem [{\citenamefont {Holevo}(1982)}]{holevo}%
  \BibitemOpen
  \bibfield  {author} {\bibinfo {author} {\bibfnamefont {A.~S.}\ \bibnamefont
  {Holevo}},\ }\href {https://doi.org/10.1007/978-88-7642-378-9} {\emph
  {\bibinfo {title} {{Probabilistic and Statistical Aspects of Quantum
  Theory}}}}\ (\bibinfo  {publisher} {North-Holland, Amsterdam},\ \bibinfo
  {year} {1982})\BibitemShut {NoStop}%
\bibitem [{\citenamefont {Chiribella}\ \emph {et~al.}(2006)\citenamefont
  {Chiribella}, \citenamefont {D'Ariano},\ and\ \citenamefont
  {Sacchi}}]{PhysRevA.73.062103}%
  \BibitemOpen
  \bibfield  {author} {\bibinfo {author} {\bibfnamefont {G.}~\bibnamefont
  {Chiribella}}, \bibinfo {author} {\bibfnamefont {G.~M.}\ \bibnamefont
  {D'Ariano}},\ and\ \bibinfo {author} {\bibfnamefont {M.~F.}\ \bibnamefont
  {Sacchi}},\ }\bibfield  {title} {\bibinfo {title} {Optimal estimation of
  squeezing},\ }\href {https://doi.org/10.1103/PhysRevA.73.062103} {\bibfield
  {journal} {\bibinfo  {journal} {Phys. Rev. A}\ }\textbf {\bibinfo {volume}
  {73}},\ \bibinfo {pages} {062103} (\bibinfo {year} {2006})}\BibitemShut
  {NoStop}%
\bibitem [{\citenamefont {Sakurai}(1993)}]{Sakurai1993Modern}%
  \BibitemOpen
  \bibfield  {author} {\bibinfo {author} {\bibfnamefont {J.~J.}\ \bibnamefont
  {Sakurai}},\ }\href {http://www.worldcat.org/isbn/0201539292} {\emph
  {\bibinfo {title} {Modern Quantum Mechanics (Revised Edition)}}}\ (\bibinfo
  {publisher} {Addison Wesley},\ \bibinfo {year} {1993})\BibitemShut {NoStop}%
\bibitem [{\citenamefont {Braunstein}\ and\ \citenamefont
  {Caves}(1994)}]{PhysRevLett.72.3439}%
  \BibitemOpen
  \bibfield  {author} {\bibinfo {author} {\bibfnamefont {S.~L.}\ \bibnamefont
  {Braunstein}}\ and\ \bibinfo {author} {\bibfnamefont {C.~M.}\ \bibnamefont
  {Caves}},\ }\bibfield  {title} {\bibinfo {title} {Statistical distance and
  the geometry of quantum states},\ }\href
  {https://doi.org/10.1103/PhysRevLett.72.3439} {\bibfield  {journal} {\bibinfo
   {journal} {Phys. Rev. Lett.}\ }\textbf {\bibinfo {volume} {72}},\ \bibinfo
  {pages} {3439} (\bibinfo {year} {1994})}\BibitemShut {NoStop}%
\bibitem [{\citenamefont {Faist}\ \emph {et~al.}(2023)\citenamefont {Faist},
  \citenamefont {Woods}, \citenamefont {Albert}, \citenamefont {Renes},
  \citenamefont {Eisert},\ and\ \citenamefont
  {Preskill}}]{PRXQuantum.4.040336}%
  \BibitemOpen
  \bibfield  {author} {\bibinfo {author} {\bibfnamefont {P.}~\bibnamefont
  {Faist}}, \bibinfo {author} {\bibfnamefont {M.~P.}\ \bibnamefont {Woods}},
  \bibinfo {author} {\bibfnamefont {V.~V.}\ \bibnamefont {Albert}}, \bibinfo
  {author} {\bibfnamefont {J.~M.}\ \bibnamefont {Renes}}, \bibinfo {author}
  {\bibfnamefont {J.}~\bibnamefont {Eisert}},\ and\ \bibinfo {author}
  {\bibfnamefont {J.}~\bibnamefont {Preskill}},\ }\bibfield  {title} {\bibinfo
  {title} {Time-energy uncertainty relation for noisy quantum metrology},\
  }\href {https://doi.org/10.1103/PRXQuantum.4.040336} {\bibfield  {journal}
  {\bibinfo  {journal} {PRX Quantum}\ }\textbf {\bibinfo {volume} {4}},\
  \bibinfo {pages} {040336} (\bibinfo {year} {2023})}\BibitemShut {NoStop}%
\bibitem [{\citenamefont {Kholevo}(1974)}]{holmom}%
  \BibitemOpen
  \bibfield  {author} {\bibinfo {author} {\bibfnamefont {A.~S.}\ \bibnamefont
  {Kholevo}},\ }\bibfield  {title} {\bibinfo {title} {{A Generalization of the
  Rao–Cramer Inequality}},\ }\href {https://doi.org/10.1137/1118039}
  {\bibfield  {journal} {\bibinfo  {journal} {Theory of Probability \& Its
  Applications}\ }\textbf {\bibinfo {volume} {18}},\ \bibinfo {pages} {359}
  (\bibinfo {year} {1974})}\BibitemShut {NoStop}%
\bibitem [{\citenamefont {Pezze}\ and\ \citenamefont {Smerzi}()}]{ps}%
  \BibitemOpen
  \bibfield  {author} {\bibinfo {author} {\bibfnamefont {L.}~\bibnamefont
  {Pezze}}\ and\ \bibinfo {author} {\bibfnamefont {A.}~\bibnamefont {Smerzi}},\
  }\bibfield  {title} {\bibinfo {title} {Quantum theory of phase estimation},\
  }in\ \href {https://doi.org/10.3254/978-1-61499-448-0-691} {\emph {\bibinfo
  {booktitle} {Atom Interferometry}}},\ \bibinfo {series} {Proceedings of the
  International School of Physics ``Enrico Ferm''}, Vol.\ \bibinfo {volume}
  {188},\ \bibinfo {editor} {edited by\ \bibinfo {editor} {\bibfnamefont
  {G.~M.}\ \bibnamefont {Tino}}\ and\ \bibinfo {editor} {\bibfnamefont {M.~A.}\
  \bibnamefont {Kasevich}}}\ (\bibinfo  {publisher} {ACM Press},\ \bibinfo
  {address} {New York, NY})\ p.\ \bibinfo {pages} {691}\BibitemShut {NoStop}%
\bibitem [{\citenamefont {Pezz\`e}\ \emph {et~al.}(2018)\citenamefont
  {Pezz\`e}, \citenamefont {Smerzi}, \citenamefont {Oberthaler}, \citenamefont
  {Schmied},\ and\ \citenamefont {Treutlein}}]{RevModPhys.90.035005}%
  \BibitemOpen
  \bibfield  {author} {\bibinfo {author} {\bibfnamefont {L.}~\bibnamefont
  {Pezz\`e}}, \bibinfo {author} {\bibfnamefont {A.}~\bibnamefont {Smerzi}},
  \bibinfo {author} {\bibfnamefont {M.~K.}\ \bibnamefont {Oberthaler}},
  \bibinfo {author} {\bibfnamefont {R.}~\bibnamefont {Schmied}},\ and\ \bibinfo
  {author} {\bibfnamefont {P.}~\bibnamefont {Treutlein}},\ }\bibfield  {title}
  {\bibinfo {title} {Quantum metrology with nonclassical states of atomic
  ensembles},\ }\href {https://doi.org/10.1103/RevModPhys.90.035005} {\bibfield
   {journal} {\bibinfo  {journal} {Rev. Mod. Phys.}\ }\textbf {\bibinfo
  {volume} {90}},\ \bibinfo {pages} {035005} (\bibinfo {year}
  {2018})}\BibitemShut {NoStop}%
\bibitem [{\citenamefont {Yurke}\ \emph {et~al.}(1986)\citenamefont {Yurke},
  \citenamefont {McCall},\ and\ \citenamefont {Klauder}}]{PhysRevA.33.4033}%
  \BibitemOpen
  \bibfield  {author} {\bibinfo {author} {\bibfnamefont {B.}~\bibnamefont
  {Yurke}}, \bibinfo {author} {\bibfnamefont {S.~L.}\ \bibnamefont {McCall}},\
  and\ \bibinfo {author} {\bibfnamefont {J.~R.}\ \bibnamefont {Klauder}},\
  }\bibfield  {title} {\bibinfo {title} {{SU(2) and SU(1,1) interferometers}},\
  }\href {https://doi.org/10.1103/PhysRevA.33.4033} {\bibfield  {journal}
  {\bibinfo  {journal} {Phys. Rev. A}\ }\textbf {\bibinfo {volume} {33}},\
  \bibinfo {pages} {4033} (\bibinfo {year} {1986})}\BibitemShut {NoStop}%
\bibitem [{\citenamefont {Gessner}\ \emph {et~al.}(2019)\citenamefont
  {Gessner}, \citenamefont {Smerzi},\ and\ \citenamefont
  {Pezz\`e}}]{PhysRevLett.122.090503}%
  \BibitemOpen
  \bibfield  {author} {\bibinfo {author} {\bibfnamefont {M.}~\bibnamefont
  {Gessner}}, \bibinfo {author} {\bibfnamefont {A.}~\bibnamefont {Smerzi}},\
  and\ \bibinfo {author} {\bibfnamefont {L.}~\bibnamefont {Pezz\`e}},\
  }\bibfield  {title} {\bibinfo {title} {Metrological nonlinear squeezing
  parameter},\ }\href {https://doi.org/10.1103/PhysRevLett.122.090503}
  {\bibfield  {journal} {\bibinfo  {journal} {Phys. Rev. Lett.}\ }\textbf
  {\bibinfo {volume} {122}},\ \bibinfo {pages} {090503} (\bibinfo {year}
  {2019})}\BibitemShut {NoStop}%
\bibitem [{\citenamefont {Glauber}(1963)}]{PhysRev.130.2529}%
  \BibitemOpen
  \bibfield  {author} {\bibinfo {author} {\bibfnamefont {R.~J.}\ \bibnamefont
  {Glauber}},\ }\bibfield  {title} {\bibinfo {title} {The quantum theory of
  optical coherence},\ }\href {https://doi.org/10.1103/PhysRev.130.2529}
  {\bibfield  {journal} {\bibinfo  {journal} {Phys. Rev.}\ }\textbf {\bibinfo
  {volume} {130}},\ \bibinfo {pages} {2529} (\bibinfo {year}
  {1963})}\BibitemShut {NoStop}%
\bibitem [{\citenamefont {Caves}(1981)}]{PhysRevD.23.1693}%
  \BibitemOpen
  \bibfield  {author} {\bibinfo {author} {\bibfnamefont {C.~M.}\ \bibnamefont
  {Caves}},\ }\bibfield  {title} {\bibinfo {title} {Quantum-mechanical noise in
  an interferometer},\ }\href {https://doi.org/10.1103/PhysRevD.23.1693}
  {\bibfield  {journal} {\bibinfo  {journal} {Phys. Rev. D}\ }\textbf {\bibinfo
  {volume} {23}},\ \bibinfo {pages} {1693} (\bibinfo {year}
  {1981})}\BibitemShut {NoStop}%
\bibitem [{\citenamefont {Ueda}(1989)}]{ueda}%
  \BibitemOpen
  \bibfield  {author} {\bibinfo {author} {\bibfnamefont {M.}~\bibnamefont
  {Ueda}},\ }\bibfield  {title} {\bibinfo {title}
  {{Probability-density-functional description of quantum photodetection
  processes}},\ }\href {https://doi.org/10.1088/0954-8998/1/2/005} {\bibfield
  {journal} {\bibinfo  {journal} {Quantum Optics: Journal of the European
  Optical Society Part B}\ }\textbf {\bibinfo {volume} {1}},\ \bibinfo {pages}
  {131} (\bibinfo {year} {1989})}\BibitemShut {NoStop}%
\bibitem [{\citenamefont {Ferrie}(2014)}]{PhysRevA.90.014101}%
  \BibitemOpen
  \bibfield  {author} {\bibinfo {author} {\bibfnamefont {C.}~\bibnamefont
  {Ferrie}},\ }\bibfield  {title} {\bibinfo {title} {Data-processing
  inequalities for quantum metrology},\ }\href
  {https://doi.org/10.1103/PhysRevA.90.014101} {\bibfield  {journal} {\bibinfo
  {journal} {Phys. Rev. A}\ }\textbf {\bibinfo {volume} {90}},\ \bibinfo
  {pages} {014101} (\bibinfo {year} {2014})}\BibitemShut {NoStop}%
\bibitem [{\citenamefont {De~Palma}\ \emph {et~al.}(2016)\citenamefont
  {De~Palma}, \citenamefont {Trevisan},\ and\ \citenamefont
  {Giovannetti}}]{7442587}%
  \BibitemOpen
  \bibfield  {author} {\bibinfo {author} {\bibfnamefont {G.}~\bibnamefont
  {De~Palma}}, \bibinfo {author} {\bibfnamefont {D.}~\bibnamefont {Trevisan}},\
  and\ \bibinfo {author} {\bibfnamefont {V.}~\bibnamefont {Giovannetti}},\
  }\bibfield  {title} {\bibinfo {title} {Passive states optimize the output of
  {Bosonic} {Gaussian} quantum channels},\ }\href@noop {} {\bibfield  {journal}
  {\bibinfo  {journal} {IEEE Transactions on Information Theory}\ }\textbf
  {\bibinfo {volume} {62}},\ \bibinfo {pages} {2895} (\bibinfo {year}
  {2016})}\BibitemShut {NoStop}%
\bibitem [{\citenamefont {Lu}\ \emph {et~al.}(2010)\citenamefont {Lu},
  \citenamefont {Wang},\ and\ \citenamefont {Sun}}]{PhysRevA.82.042103}%
  \BibitemOpen
  \bibfield  {author} {\bibinfo {author} {\bibfnamefont {X.-M.}\ \bibnamefont
  {Lu}}, \bibinfo {author} {\bibfnamefont {X.}~\bibnamefont {Wang}},\ and\
  \bibinfo {author} {\bibfnamefont {C.~P.}\ \bibnamefont {Sun}},\ }\bibfield
  {title} {\bibinfo {title} {{Quantum Fisher information flow and non-Markovian
  processes of open systems}},\ }\href
  {https://doi.org/10.1103/PhysRevA.82.042103} {\bibfield  {journal} {\bibinfo
  {journal} {Phys. Rev. A}\ }\textbf {\bibinfo {volume} {82}},\ \bibinfo
  {pages} {042103} (\bibinfo {year} {2010})}\BibitemShut {NoStop}%
\bibitem [{\citenamefont {Paris}(2009)}]{paris}%
  \BibitemOpen
  \bibfield  {author} {\bibinfo {author} {\bibfnamefont {M.~G.~A.}\
  \bibnamefont {Paris}},\ }\bibfield  {title} {\bibinfo {title} {{Quantum
  estimation for quantum technology}},\ }\href
  {https://doi.org/10.1142/S0219749909004839} {\bibfield  {journal} {\bibinfo
  {journal} {International Journal of Quantum Information}\ }\textbf {\bibinfo
  {volume} {07}},\ \bibinfo {pages} {125} (\bibinfo {year} {2009})}\BibitemShut
  {NoStop}%
\bibitem [{\citenamefont {\ifmmode~\check{S}\else
  \v{S}\fi{}afr\'anek}(2017)}]{PhysRevA.95.052320}%
  \BibitemOpen
  \bibfield  {author} {\bibinfo {author} {\bibfnamefont {D.}~\bibnamefont
  {\ifmmode~\check{S}\else \v{S}\fi{}afr\'anek}},\ }\bibfield  {title}
  {\bibinfo {title} {{Discontinuities of the quantum Fisher information and the
  Bures metric}},\ }\href {https://doi.org/10.1103/PhysRevA.95.052320}
  {\bibfield  {journal} {\bibinfo  {journal} {Phys. Rev. A}\ }\textbf {\bibinfo
  {volume} {95}},\ \bibinfo {pages} {052320} (\bibinfo {year}
  {2017})}\BibitemShut {NoStop}%
\bibitem [{\citenamefont {Bratteli}\ and\ \citenamefont {Robinson}(1997)}]{br}%
  \BibitemOpen
  \bibfield  {author} {\bibinfo {author} {\bibfnamefont {O.}~\bibnamefont
  {Bratteli}}\ and\ \bibinfo {author} {\bibfnamefont {D.~W.}\ \bibnamefont
  {Robinson}},\ }\href {https://doi.org/10.1007/978-3-662-03444-6} {\emph
  {\bibinfo {title} {{Operator Algebras and Quantum Statiscal Mechanics 2}}}}\
  (\bibinfo  {publisher} {Springer-Verlag Berlin Heidelberg},\ \bibinfo {year}
  {1997})\BibitemShut {NoStop}%
\bibitem [{\citenamefont {Rezakhani}\ \emph {et~al.}(2019)\citenamefont
  {Rezakhani}, \citenamefont {Hassani},\ and\ \citenamefont
  {Alipour}}]{PhysRevA.100.032317}%
  \BibitemOpen
  \bibfield  {author} {\bibinfo {author} {\bibfnamefont {A.~T.}\ \bibnamefont
  {Rezakhani}}, \bibinfo {author} {\bibfnamefont {M.}~\bibnamefont {Hassani}},\
  and\ \bibinfo {author} {\bibfnamefont {S.}~\bibnamefont {Alipour}},\
  }\bibfield  {title} {\bibinfo {title} {{Continuity of the quantum Fisher
  information}},\ }\href {https://doi.org/10.1103/PhysRevA.100.032317}
  {\bibfield  {journal} {\bibinfo  {journal} {Phys. Rev. A}\ }\textbf {\bibinfo
  {volume} {100}},\ \bibinfo {pages} {032317} (\bibinfo {year}
  {2019})}\BibitemShut {NoStop}%
\bibitem [{\citenamefont {Fujiwara}(2001)}]{PhysRevA.63.042304}%
  \BibitemOpen
  \bibfield  {author} {\bibinfo {author} {\bibfnamefont {A.}~\bibnamefont
  {Fujiwara}},\ }\bibfield  {title} {\bibinfo {title} {Quantum channel
  identification problem},\ }\href {https://doi.org/10.1103/PhysRevA.63.042304}
  {\bibfield  {journal} {\bibinfo  {journal} {Phys. Rev. A}\ }\textbf {\bibinfo
  {volume} {63}},\ \bibinfo {pages} {042304} (\bibinfo {year}
  {2001})}\BibitemShut {NoStop}%
\bibitem [{\citenamefont {Rubio}(2022)}]{rubio2022quantum}%
  \BibitemOpen
  \bibfield  {author} {\bibinfo {author} {\bibfnamefont {J.}~\bibnamefont
  {Rubio}},\ }\bibfield  {title} {\bibinfo {title} {Quantum scale estimation},\
  }\href@noop {} {\bibfield  {journal} {\bibinfo  {journal} {Quantum Science
  and Technology}\ }\textbf {\bibinfo {volume} {8}},\ \bibinfo {pages} {015009}
  (\bibinfo {year} {2022})}\BibitemShut {NoStop}%
\bibitem [{\citenamefont {Rubio}\ \emph {et~al.}(2021)\citenamefont {Rubio},
  \citenamefont {Anders},\ and\ \citenamefont
  {Correa}}]{PhysRevLett.127.190402}%
  \BibitemOpen
  \bibfield  {author} {\bibinfo {author} {\bibfnamefont {J.}~\bibnamefont
  {Rubio}}, \bibinfo {author} {\bibfnamefont {J.}~\bibnamefont {Anders}},\ and\
  \bibinfo {author} {\bibfnamefont {L.~A.}\ \bibnamefont {Correa}},\ }\bibfield
   {title} {\bibinfo {title} {Global quantum thermometry},\ }\href
  {https://doi.org/10.1103/PhysRevLett.127.190402} {\bibfield  {journal}
  {\bibinfo  {journal} {Phys. Rev. Lett.}\ }\textbf {\bibinfo {volume} {127}},\
  \bibinfo {pages} {190402} (\bibinfo {year} {2021})}\BibitemShut {NoStop}%
\bibitem [{\citenamefont {Barndorff-Nielsen}\ \emph {et~al.}(2003)\citenamefont
  {Barndorff-Nielsen}, \citenamefont {Gill},\ and\ \citenamefont {Jupp}}]{bng}%
  \BibitemOpen
  \bibfield  {author} {\bibinfo {author} {\bibfnamefont {O.~E.}\ \bibnamefont
  {Barndorff-Nielsen}}, \bibinfo {author} {\bibfnamefont {R.~D.}\ \bibnamefont
  {Gill}},\ and\ \bibinfo {author} {\bibfnamefont {P.~E.}\ \bibnamefont
  {Jupp}},\ }\bibfield  {title} {\bibinfo {title} {On quantum statistical
  inference},\ }\href
  {https://rss.onlinelibrary.wiley.com/doi/abs/10.1111/1467-9868.00415}
  {\bibfield  {journal} {\bibinfo  {journal} {Journal of the Royal Statistical
  Society: Series B (Statistical Methodology)}\ }\textbf {\bibinfo {volume}
  {65}},\ \bibinfo {pages} {775} (\bibinfo {year} {2003})}\BibitemShut
  {NoStop}%
\bibitem [{\citenamefont {Lehmann}\ and\ \citenamefont
  {Casella}(1998)}]{lehmanntheory}%
  \BibitemOpen
  \bibfield  {author} {\bibinfo {author} {\bibfnamefont {E.~L.}\ \bibnamefont
  {Lehmann}}\ and\ \bibinfo {author} {\bibfnamefont {G.}~\bibnamefont
  {Casella}},\ }\href@noop {} {\emph {\bibinfo {title} {Theory of Point
  Estimation}}}\ (\bibinfo  {publisher} {Springer},\ \bibinfo {year}
  {1998})\BibitemShut {NoStop}%
\bibitem [{\citenamefont {Van~der Vaart}(2000)}]{van2000asymptotic}%
  \BibitemOpen
  \bibfield  {author} {\bibinfo {author} {\bibfnamefont {A.~W.}\ \bibnamefont
  {Van~der Vaart}},\ }\href@noop {} {\emph {\bibinfo {title} {Asymptotic
  statistics}}},\ Vol.~\bibinfo {volume} {3}\ (\bibinfo  {publisher} {Cambridge
  University Press},\ \bibinfo {year} {2000})\BibitemShut {NoStop}%
\bibitem [{\citenamefont {Shao}(2008)}]{shao2008mathematical}%
  \BibitemOpen
  \bibfield  {author} {\bibinfo {author} {\bibfnamefont {J.}~\bibnamefont
  {Shao}},\ }\href@noop {} {\emph {\bibinfo {title} {Mathematical
  statistics}}}\ (\bibinfo  {publisher} {Springer Science \& Business Media},\
  \bibinfo {year} {2008})\BibitemShut {NoStop}%
\bibitem [{\citenamefont {Humphreys}\ \emph {et~al.}(2013)\citenamefont
  {Humphreys}, \citenamefont {Barbieri}, \citenamefont {Datta},\ and\
  \citenamefont {Walmsley}}]{PhysRevLett.111.070403}%
  \BibitemOpen
  \bibfield  {author} {\bibinfo {author} {\bibfnamefont {P.~C.}\ \bibnamefont
  {Humphreys}}, \bibinfo {author} {\bibfnamefont {M.}~\bibnamefont {Barbieri}},
  \bibinfo {author} {\bibfnamefont {A.}~\bibnamefont {Datta}},\ and\ \bibinfo
  {author} {\bibfnamefont {I.~A.}\ \bibnamefont {Walmsley}},\ }\bibfield
  {title} {\bibinfo {title} {Quantum enhanced multiple phase estimation},\
  }\href {https://doi.org/10.1103/PhysRevLett.111.070403} {\bibfield  {journal}
  {\bibinfo  {journal} {Phys. Rev. Lett.}\ }\textbf {\bibinfo {volume} {111}},\
  \bibinfo {pages} {070403} (\bibinfo {year} {2013})}\BibitemShut {NoStop}%
\bibitem [{\citenamefont {Volkoff}\ and\ \citenamefont
  {Whaley}(2014)}]{PhysRevA.89.012122}%
  \BibitemOpen
  \bibfield  {author} {\bibinfo {author} {\bibfnamefont {T.~J.}\ \bibnamefont
  {Volkoff}}\ and\ \bibinfo {author} {\bibfnamefont {K.~B.}\ \bibnamefont
  {Whaley}},\ }\bibfield  {title} {\bibinfo {title} {{Measurement- and
  comparison-based sizes of Schr\"odinger cat states of light}},\ }\href
  {https://doi.org/10.1103/PhysRevA.89.012122} {\bibfield  {journal} {\bibinfo
  {journal} {Phys. Rev. A}\ }\textbf {\bibinfo {volume} {89}},\ \bibinfo
  {pages} {012122} (\bibinfo {year} {2014})}\BibitemShut {NoStop}%
\bibitem [{\citenamefont {Ge}\ \emph {et~al.}(2019)\citenamefont {Ge},
  \citenamefont {Sawyer}, \citenamefont {Britton}, \citenamefont {Jacobs},
  \citenamefont {Bollinger},\ and\ \citenamefont
  {Foss-Feig}}]{PhysRevLett.122.030501}%
  \BibitemOpen
  \bibfield  {author} {\bibinfo {author} {\bibfnamefont {W.}~\bibnamefont
  {Ge}}, \bibinfo {author} {\bibfnamefont {B.~C.}\ \bibnamefont {Sawyer}},
  \bibinfo {author} {\bibfnamefont {J.~W.}\ \bibnamefont {Britton}}, \bibinfo
  {author} {\bibfnamefont {K.}~\bibnamefont {Jacobs}}, \bibinfo {author}
  {\bibfnamefont {J.~J.}\ \bibnamefont {Bollinger}},\ and\ \bibinfo {author}
  {\bibfnamefont {M.}~\bibnamefont {Foss-Feig}},\ }\bibfield  {title} {\bibinfo
  {title} {Trapped ion quantum information processing with squeezed phonons},\
  }\href {https://doi.org/10.1103/PhysRevLett.122.030501} {\bibfield  {journal}
  {\bibinfo  {journal} {Phys. Rev. Lett.}\ }\textbf {\bibinfo {volume} {122}},\
  \bibinfo {pages} {030501} (\bibinfo {year} {2019})}\BibitemShut {NoStop}%
\bibitem [{\citenamefont {M\o{}lmer}\ and\ \citenamefont
  {S\o{}rensen}(1999)}]{PhysRevLett.82.1835}%
  \BibitemOpen
  \bibfield  {author} {\bibinfo {author} {\bibfnamefont {K.}~\bibnamefont
  {M\o{}lmer}}\ and\ \bibinfo {author} {\bibfnamefont {A.}~\bibnamefont
  {S\o{}rensen}},\ }\bibfield  {title} {\bibinfo {title} {Multiparticle
  entanglement of hot trapped ions},\ }\href
  {https://doi.org/10.1103/PhysRevLett.82.1835} {\bibfield  {journal} {\bibinfo
   {journal} {Phys. Rev. Lett.}\ }\textbf {\bibinfo {volume} {82}},\ \bibinfo
  {pages} {1835} (\bibinfo {year} {1999})}\BibitemShut {NoStop}%
\bibitem [{\citenamefont {Cirac}\ and\ \citenamefont
  {Zoller}(1995)}]{PhysRevLett.74.4091}%
  \BibitemOpen
  \bibfield  {author} {\bibinfo {author} {\bibfnamefont {J.~I.}\ \bibnamefont
  {Cirac}}\ and\ \bibinfo {author} {\bibfnamefont {P.}~\bibnamefont {Zoller}},\
  }\bibfield  {title} {\bibinfo {title} {Quantum computations with cold trapped
  ions},\ }\href {https://doi.org/10.1103/PhysRevLett.74.4091} {\bibfield
  {journal} {\bibinfo  {journal} {Phys. Rev. Lett.}\ }\textbf {\bibinfo
  {volume} {74}},\ \bibinfo {pages} {4091} (\bibinfo {year}
  {1995})}\BibitemShut {NoStop}%
\bibitem [{\citenamefont {Um}\ \emph {et~al.}(2016)\citenamefont {Um},
  \citenamefont {Zhang}, \citenamefont {Lv}, \citenamefont {Lu}, \citenamefont
  {An}, \citenamefont {Zhang}, \citenamefont {Nha}, \citenamefont {Kim},\ and\
  \citenamefont {Kim}}]{Um2016}%
  \BibitemOpen
  \bibfield  {author} {\bibinfo {author} {\bibfnamefont {M.}~\bibnamefont
  {Um}}, \bibinfo {author} {\bibfnamefont {J.}~\bibnamefont {Zhang}}, \bibinfo
  {author} {\bibfnamefont {D.}~\bibnamefont {Lv}}, \bibinfo {author}
  {\bibfnamefont {Y.}~\bibnamefont {Lu}}, \bibinfo {author} {\bibfnamefont
  {S.}~\bibnamefont {An}}, \bibinfo {author} {\bibfnamefont {J.-N.}\
  \bibnamefont {Zhang}}, \bibinfo {author} {\bibfnamefont {H.}~\bibnamefont
  {Nha}}, \bibinfo {author} {\bibfnamefont {M.~S.}\ \bibnamefont {Kim}},\ and\
  \bibinfo {author} {\bibfnamefont {K.}~\bibnamefont {Kim}},\ }\bibfield
  {title} {\bibinfo {title} {Phonon arithmetic in a trapped ion system},\
  }\href {https://doi.org/10.1038/ncomms11410} {\bibfield  {journal} {\bibinfo
  {journal} {Nature Communications}\ }\textbf {\bibinfo {volume} {7}},\
  \bibinfo {pages} {11410} (\bibinfo {year} {2016})}\BibitemShut {NoStop}%
\bibitem [{\citenamefont {Wolf}\ \emph {et~al.}(2019)\citenamefont {Wolf},
  \citenamefont {Shi}, \citenamefont {Heip}, \citenamefont {Gessner},
  \citenamefont {Pezz{\`e}}, \citenamefont {Smerzi}, \citenamefont {Schulte},
  \citenamefont {Hammerer},\ and\ \citenamefont {Schmidt}}]{Wolf2019}%
  \BibitemOpen
  \bibfield  {author} {\bibinfo {author} {\bibfnamefont {F.}~\bibnamefont
  {Wolf}}, \bibinfo {author} {\bibfnamefont {C.}~\bibnamefont {Shi}}, \bibinfo
  {author} {\bibfnamefont {J.~C.}\ \bibnamefont {Heip}}, \bibinfo {author}
  {\bibfnamefont {M.}~\bibnamefont {Gessner}}, \bibinfo {author} {\bibfnamefont
  {L.}~\bibnamefont {Pezz{\`e}}}, \bibinfo {author} {\bibfnamefont
  {A.}~\bibnamefont {Smerzi}}, \bibinfo {author} {\bibfnamefont
  {M.}~\bibnamefont {Schulte}}, \bibinfo {author} {\bibfnamefont
  {K.}~\bibnamefont {Hammerer}},\ and\ \bibinfo {author} {\bibfnamefont
  {P.~O.}\ \bibnamefont {Schmidt}},\ }\bibfield  {title} {\bibinfo {title}
  {{Motional Fock states for quantum-enhanced amplitude and phase measurements
  with trapped ions}},\ }\href {https://doi.org/10.1038/s41467-019-10576-4}
  {\bibfield  {journal} {\bibinfo  {journal} {Nature Communications}\ }\textbf
  {\bibinfo {volume} {10}},\ \bibinfo {pages} {2929} (\bibinfo {year}
  {2019})}\BibitemShut {NoStop}%
\bibitem [{\citenamefont {Fl\"uhmann}\ and\ \citenamefont
  {Home}(2020)}]{PhysRevLett.125.043602}%
  \BibitemOpen
  \bibfield  {author} {\bibinfo {author} {\bibfnamefont {C.}~\bibnamefont
  {Fl\"uhmann}}\ and\ \bibinfo {author} {\bibfnamefont {J.~P.}\ \bibnamefont
  {Home}},\ }\bibfield  {title} {\bibinfo {title} {{Direct
  Characteristic-Function Tomography of Quantum States of the Trapped-Ion
  Motional Oscillator}},\ }\href
  {https://doi.org/10.1103/PhysRevLett.125.043602} {\bibfield  {journal}
  {\bibinfo  {journal} {Phys. Rev. Lett.}\ }\textbf {\bibinfo {volume} {125}},\
  \bibinfo {pages} {043602} (\bibinfo {year} {2020})}\BibitemShut {NoStop}%
\bibitem [{\citenamefont {Mallweger}\ \emph {et~al.}(2023)\citenamefont
  {Mallweger}, \citenamefont {de~Oliveira}, \citenamefont {Thomm},
  \citenamefont {Parke}, \citenamefont {Kuk}, \citenamefont {Higgins},
  \citenamefont {Bachelard}, \citenamefont {Villas-Boas},\ and\ \citenamefont
  {Hennrich}}]{PhysRevLett.131.223603}%
  \BibitemOpen
  \bibfield  {author} {\bibinfo {author} {\bibfnamefont {M.}~\bibnamefont
  {Mallweger}}, \bibinfo {author} {\bibfnamefont {M.~H.}\ \bibnamefont
  {de~Oliveira}}, \bibinfo {author} {\bibfnamefont {R.}~\bibnamefont {Thomm}},
  \bibinfo {author} {\bibfnamefont {H.}~\bibnamefont {Parke}}, \bibinfo
  {author} {\bibfnamefont {N.}~\bibnamefont {Kuk}}, \bibinfo {author}
  {\bibfnamefont {G.}~\bibnamefont {Higgins}}, \bibinfo {author} {\bibfnamefont
  {R.}~\bibnamefont {Bachelard}}, \bibinfo {author} {\bibfnamefont {C.~J.}\
  \bibnamefont {Villas-Boas}},\ and\ \bibinfo {author} {\bibfnamefont
  {M.}~\bibnamefont {Hennrich}},\ }\bibfield  {title} {\bibinfo {title}
  {{Single-Shot Measurements of Phonon Number States Using the Autler-Townes
  Effect}},\ }\href {https://doi.org/10.1103/PhysRevLett.131.223603} {\bibfield
   {journal} {\bibinfo  {journal} {Phys. Rev. Lett.}\ }\textbf {\bibinfo
  {volume} {131}},\ \bibinfo {pages} {223603} (\bibinfo {year}
  {2023})}\BibitemShut {NoStop}%
\bibitem [{\citenamefont {Katz}\ \emph {et~al.}(2023)\citenamefont {Katz},
  \citenamefont {Cetina},\ and\ \citenamefont {Monroe}}]{PRXQuantum.4.030311}%
  \BibitemOpen
  \bibfield  {author} {\bibinfo {author} {\bibfnamefont {O.}~\bibnamefont
  {Katz}}, \bibinfo {author} {\bibfnamefont {M.}~\bibnamefont {Cetina}},\ and\
  \bibinfo {author} {\bibfnamefont {C.}~\bibnamefont {Monroe}},\ }\bibfield
  {title} {\bibinfo {title} {{Programmable $N$-Body Interactions with Trapped
  Ions}},\ }\href {https://doi.org/10.1103/PRXQuantum.4.030311} {\bibfield
  {journal} {\bibinfo  {journal} {PRX Quantum}\ }\textbf {\bibinfo {volume}
  {4}},\ \bibinfo {pages} {030311} (\bibinfo {year} {2023})}\BibitemShut
  {NoStop}%
\bibitem [{\citenamefont {Lewis-Swan}\ \emph {et~al.}(2024)\citenamefont
  {Lewis-Swan}, \citenamefont {Castro}, \citenamefont {Barberena},\ and\
  \citenamefont {Rey}}]{PhysRevLett.132.163601}%
  \BibitemOpen
  \bibfield  {author} {\bibinfo {author} {\bibfnamefont {R.~J.}\ \bibnamefont
  {Lewis-Swan}}, \bibinfo {author} {\bibfnamefont {J.~C.~Z.}\ \bibnamefont
  {Castro}}, \bibinfo {author} {\bibfnamefont {D.}~\bibnamefont {Barberena}},\
  and\ \bibinfo {author} {\bibfnamefont {A.~M.}\ \bibnamefont {Rey}},\
  }\bibfield  {title} {\bibinfo {title} {Exploiting nonclassical motion of a
  trapped ion crystal for quantum-enhanced metrology of global and differential
  spin rotations},\ }\href {https://doi.org/10.1103/PhysRevLett.132.163601}
  {\bibfield  {journal} {\bibinfo  {journal} {Phys. Rev. Lett.}\ }\textbf
  {\bibinfo {volume} {132}},\ \bibinfo {pages} {163601} (\bibinfo {year}
  {2024})}\BibitemShut {NoStop}%
\bibitem [{\citenamefont {Katz}\ \emph {et~al.}(2022)\citenamefont {Katz},
  \citenamefont {Cetina},\ and\ \citenamefont
  {Monroe}}]{PhysRevLett.129.063603}%
  \BibitemOpen
  \bibfield  {author} {\bibinfo {author} {\bibfnamefont {O.}~\bibnamefont
  {Katz}}, \bibinfo {author} {\bibfnamefont {M.}~\bibnamefont {Cetina}},\ and\
  \bibinfo {author} {\bibfnamefont {C.}~\bibnamefont {Monroe}},\ }\bibfield
  {title} {\bibinfo {title} {{$N$-Body Interactions between Trapped Ion Qubits
  via Spin-Dependent Squeezing}},\ }\href
  {https://doi.org/10.1103/PhysRevLett.129.063603} {\bibfield  {journal}
  {\bibinfo  {journal} {Phys. Rev. Lett.}\ }\textbf {\bibinfo {volume} {129}},\
  \bibinfo {pages} {063603} (\bibinfo {year} {2022})}\BibitemShut {NoStop}%
\bibitem [{\citenamefont {Dylewsky}\ \emph {et~al.}(2016)\citenamefont
  {Dylewsky}, \citenamefont {Freericks}, \citenamefont {Wall}, \citenamefont
  {Rey},\ and\ \citenamefont {Foss-Feig}}]{PhysRevA.93.013415}%
  \BibitemOpen
  \bibfield  {author} {\bibinfo {author} {\bibfnamefont {D.}~\bibnamefont
  {Dylewsky}}, \bibinfo {author} {\bibfnamefont {J.~K.}\ \bibnamefont
  {Freericks}}, \bibinfo {author} {\bibfnamefont {M.~L.}\ \bibnamefont {Wall}},
  \bibinfo {author} {\bibfnamefont {A.~M.}\ \bibnamefont {Rey}},\ and\ \bibinfo
  {author} {\bibfnamefont {M.}~\bibnamefont {Foss-Feig}},\ }\bibfield  {title}
  {\bibinfo {title} {Nonperturbative calculation of phonon effects on spin
  squeezing},\ }\href {https://doi.org/10.1103/PhysRevA.93.013415} {\bibfield
  {journal} {\bibinfo  {journal} {Phys. Rev. A}\ }\textbf {\bibinfo {volume}
  {93}},\ \bibinfo {pages} {013415} (\bibinfo {year} {2016})}\BibitemShut
  {NoStop}%
\bibitem [{\citenamefont {Burd}\ \emph {et~al.}(2019)\citenamefont {Burd},
  \citenamefont {Srinivas}, \citenamefont {Bollinger}, \citenamefont {Wilson},
  \citenamefont {Wineland}, \citenamefont {Leibfried}, \citenamefont
  {Slichter},\ and\ \citenamefont {Allcock}}]{wineland}%
  \BibitemOpen
  \bibfield  {author} {\bibinfo {author} {\bibfnamefont {S.~C.}\ \bibnamefont
  {Burd}}, \bibinfo {author} {\bibfnamefont {R.}~\bibnamefont {Srinivas}},
  \bibinfo {author} {\bibfnamefont {J.~J.}\ \bibnamefont {Bollinger}}, \bibinfo
  {author} {\bibfnamefont {A.~C.}\ \bibnamefont {Wilson}}, \bibinfo {author}
  {\bibfnamefont {D.~J.}\ \bibnamefont {Wineland}}, \bibinfo {author}
  {\bibfnamefont {D.}~\bibnamefont {Leibfried}}, \bibinfo {author}
  {\bibfnamefont {D.~H.}\ \bibnamefont {Slichter}},\ and\ \bibinfo {author}
  {\bibfnamefont {D.~T.~C.}\ \bibnamefont {Allcock}},\ }\bibfield  {title}
  {\bibinfo {title} {Quantum amplification of mechanical oscillator motion},\
  }\href {https://doi.org/10.1126/science.aaw2884} {\bibfield  {journal}
  {\bibinfo  {journal} {Science}\ }\textbf {\bibinfo {volume} {364}},\ \bibinfo
  {pages} {1163} (\bibinfo {year} {2019})}\BibitemShut {NoStop}%
\bibitem [{\citenamefont {Affolter}\ \emph {et~al.}(2023)\citenamefont
  {Affolter}, \citenamefont {Ge}, \citenamefont {Bullock}, \citenamefont
  {Burd}, \citenamefont {Gilmore}, \citenamefont {Lilieholm}, \citenamefont
  {Carter},\ and\ \citenamefont {Bollinger}}]{PhysRevA.107.032425}%
  \BibitemOpen
  \bibfield  {author} {\bibinfo {author} {\bibfnamefont {M.}~\bibnamefont
  {Affolter}}, \bibinfo {author} {\bibfnamefont {W.}~\bibnamefont {Ge}},
  \bibinfo {author} {\bibfnamefont {B.}~\bibnamefont {Bullock}}, \bibinfo
  {author} {\bibfnamefont {S.~C.}\ \bibnamefont {Burd}}, \bibinfo {author}
  {\bibfnamefont {K.~A.}\ \bibnamefont {Gilmore}}, \bibinfo {author}
  {\bibfnamefont {J.~F.}\ \bibnamefont {Lilieholm}}, \bibinfo {author}
  {\bibfnamefont {A.~L.}\ \bibnamefont {Carter}},\ and\ \bibinfo {author}
  {\bibfnamefont {J.~J.}\ \bibnamefont {Bollinger}},\ }\bibfield  {title}
  {\bibinfo {title} {Toward improved quantum simulations and sensing with
  trapped two-dimensional ion crystals via parametric amplification},\ }\href
  {https://doi.org/10.1103/PhysRevA.107.032425} {\bibfield  {journal} {\bibinfo
   {journal} {Phys. Rev. A}\ }\textbf {\bibinfo {volume} {107}},\ \bibinfo
  {pages} {032425} (\bibinfo {year} {2023})}\BibitemShut {NoStop}%
\bibitem [{\citenamefont {Hucul}\ \emph {et~al.}(2015)\citenamefont {Hucul},
  \citenamefont {Inlek}, \citenamefont {Vittorini}, \citenamefont {Crocker},
  \citenamefont {Debnath}, \citenamefont {Clark},\ and\ \citenamefont
  {Monroe}}]{Hucul2015}%
  \BibitemOpen
  \bibfield  {author} {\bibinfo {author} {\bibfnamefont {D.}~\bibnamefont
  {Hucul}}, \bibinfo {author} {\bibfnamefont {I.~V.}\ \bibnamefont {Inlek}},
  \bibinfo {author} {\bibfnamefont {G.}~\bibnamefont {Vittorini}}, \bibinfo
  {author} {\bibfnamefont {C.}~\bibnamefont {Crocker}}, \bibinfo {author}
  {\bibfnamefont {S.}~\bibnamefont {Debnath}}, \bibinfo {author} {\bibfnamefont
  {S.~M.}\ \bibnamefont {Clark}},\ and\ \bibinfo {author} {\bibfnamefont
  {C.}~\bibnamefont {Monroe}},\ }\bibfield  {title} {\bibinfo {title} {Modular
  entanglement of atomic qubits using photons and phonons},\ }\href
  {https://doi.org/10.1038/nphys3150} {\bibfield  {journal} {\bibinfo
  {journal} {Nature Physics}\ }\textbf {\bibinfo {volume} {11}},\ \bibinfo
  {pages} {37} (\bibinfo {year} {2015})}\BibitemShut {NoStop}%
\bibitem [{\citenamefont {Hughes}\ \emph {et~al.}(2020)\citenamefont {Hughes},
  \citenamefont {Sch\"afer}, \citenamefont {Thirumalai}, \citenamefont
  {Nadlinger}, \citenamefont {Woodrow}, \citenamefont {Lucas},\ and\
  \citenamefont {Ballance}}]{PhysRevLett.125.080504}%
  \BibitemOpen
  \bibfield  {author} {\bibinfo {author} {\bibfnamefont {A.~C.}\ \bibnamefont
  {Hughes}}, \bibinfo {author} {\bibfnamefont {V.~M.}\ \bibnamefont
  {Sch\"afer}}, \bibinfo {author} {\bibfnamefont {K.}~\bibnamefont
  {Thirumalai}}, \bibinfo {author} {\bibfnamefont {D.~P.}\ \bibnamefont
  {Nadlinger}}, \bibinfo {author} {\bibfnamefont {S.~R.}\ \bibnamefont
  {Woodrow}}, \bibinfo {author} {\bibfnamefont {D.~M.}\ \bibnamefont {Lucas}},\
  and\ \bibinfo {author} {\bibfnamefont {C.~J.}\ \bibnamefont {Ballance}},\
  }\bibfield  {title} {\bibinfo {title} {Benchmarking a high-fidelity
  mixed-species entangling gate},\ }\href
  {https://doi.org/10.1103/PhysRevLett.125.080504} {\bibfield  {journal}
  {\bibinfo  {journal} {Phys. Rev. Lett.}\ }\textbf {\bibinfo {volume} {125}},\
  \bibinfo {pages} {080504} (\bibinfo {year} {2020})}\BibitemShut {NoStop}%
\bibitem [{\citenamefont {Clarke}\ \emph {et~al.}(2020)\citenamefont {Clarke},
  \citenamefont {Sahium}, \citenamefont {Khosla}, \citenamefont {Pikovski},
  \citenamefont {Kim},\ and\ \citenamefont {Vanner}}]{Clarke_2020}%
  \BibitemOpen
  \bibfield  {author} {\bibinfo {author} {\bibfnamefont {J.}~\bibnamefont
  {Clarke}}, \bibinfo {author} {\bibfnamefont {P.}~\bibnamefont {Sahium}},
  \bibinfo {author} {\bibfnamefont {K.~E.}\ \bibnamefont {Khosla}}, \bibinfo
  {author} {\bibfnamefont {I.}~\bibnamefont {Pikovski}}, \bibinfo {author}
  {\bibfnamefont {M.~S.}\ \bibnamefont {Kim}},\ and\ \bibinfo {author}
  {\bibfnamefont {M.~R.}\ \bibnamefont {Vanner}},\ }\bibfield  {title}
  {\bibinfo {title} {Generating mechanical and optomechanical entanglement via
  pulsed interaction and measurement},\ }\href
  {https://doi.org/10.1088/1367-2630/ab7ddd} {\bibfield  {journal} {\bibinfo
  {journal} {New Journal of Physics}\ }\textbf {\bibinfo {volume} {22}},\
  \bibinfo {pages} {063001} (\bibinfo {year} {2020})}\BibitemShut {NoStop}%
\bibitem [{\citenamefont {de~Moraes~Neto}\ \emph {et~al.}(2019)\citenamefont
  {de~Moraes~Neto}, \citenamefont {Montenegro}, \citenamefont {Teizen},\ and\
  \citenamefont {Vernek}}]{PhysRevA.99.043836}%
  \BibitemOpen
  \bibfield  {author} {\bibinfo {author} {\bibfnamefont {G.~D.}\ \bibnamefont
  {de~Moraes~Neto}}, \bibinfo {author} {\bibfnamefont {V.}~\bibnamefont
  {Montenegro}}, \bibinfo {author} {\bibfnamefont {V.~F.}\ \bibnamefont
  {Teizen}},\ and\ \bibinfo {author} {\bibfnamefont {E.}~\bibnamefont
  {Vernek}},\ }\bibfield  {title} {\bibinfo {title} {Dissipative
  {phonon-Fock-state} production in strong nonlinear optomechanics},\ }\href
  {https://doi.org/10.1103/PhysRevA.99.043836} {\bibfield  {journal} {\bibinfo
  {journal} {Phys. Rev. A}\ }\textbf {\bibinfo {volume} {99}},\ \bibinfo
  {pages} {043836} (\bibinfo {year} {2019})}\BibitemShut {NoStop}%
\bibitem [{\citenamefont {Gottesman}\ \emph {et~al.}(2001)\citenamefont
  {Gottesman}, \citenamefont {Kitaev},\ and\ \citenamefont
  {Preskill}}]{PhysRevA.64.012310}%
  \BibitemOpen
  \bibfield  {author} {\bibinfo {author} {\bibfnamefont {D.}~\bibnamefont
  {Gottesman}}, \bibinfo {author} {\bibfnamefont {A.}~\bibnamefont {Kitaev}},\
  and\ \bibinfo {author} {\bibfnamefont {J.}~\bibnamefont {Preskill}},\
  }\bibfield  {title} {\bibinfo {title} {Encoding a qubit in an oscillator},\
  }\href {https://doi.org/10.1103/PhysRevA.64.012310} {\bibfield  {journal}
  {\bibinfo  {journal} {Phys. Rev. A}\ }\textbf {\bibinfo {volume} {64}},\
  \bibinfo {pages} {012310} (\bibinfo {year} {2001})}\BibitemShut {NoStop}%
\bibitem [{\citenamefont {Matsos}\ \emph {et~al.}(2024)\citenamefont {Matsos},
  \citenamefont {Valahu}, \citenamefont {Navickas}, \citenamefont {Rao},
  \citenamefont {Millican}, \citenamefont {Kolesnikow}, \citenamefont
  {Biercuk},\ and\ \citenamefont {Tan}}]{PhysRevLett.133.050602}%
  \BibitemOpen
  \bibfield  {author} {\bibinfo {author} {\bibfnamefont {V.~G.}\ \bibnamefont
  {Matsos}}, \bibinfo {author} {\bibfnamefont {C.~H.}\ \bibnamefont {Valahu}},
  \bibinfo {author} {\bibfnamefont {T.}~\bibnamefont {Navickas}}, \bibinfo
  {author} {\bibfnamefont {A.~D.}\ \bibnamefont {Rao}}, \bibinfo {author}
  {\bibfnamefont {M.~J.}\ \bibnamefont {Millican}}, \bibinfo {author}
  {\bibfnamefont {X.~C.}\ \bibnamefont {Kolesnikow}}, \bibinfo {author}
  {\bibfnamefont {M.~J.}\ \bibnamefont {Biercuk}},\ and\ \bibinfo {author}
  {\bibfnamefont {T.~R.}\ \bibnamefont {Tan}},\ }\bibfield  {title} {\bibinfo
  {title} {{Robust and Deterministic Preparation of Bosonic Logical States in a
  Trapped Ion}},\ }\href {https://doi.org/10.1103/PhysRevLett.133.050602}
  {\bibfield  {journal} {\bibinfo  {journal} {Phys. Rev. Lett.}\ }\textbf
  {\bibinfo {volume} {133}},\ \bibinfo {pages} {050602} (\bibinfo {year}
  {2024})}\BibitemShut {NoStop}%
\bibitem [{\citenamefont {Burd}\ \emph {et~al.}(2024)\citenamefont {Burd},
  \citenamefont {Knaack}, \citenamefont {Srinivas}, \citenamefont {Arenz},
  \citenamefont {Collopy}, \citenamefont {Stephenson}, \citenamefont {Wilson},
  \citenamefont {Wineland}, \citenamefont {Leibfried}, \citenamefont
  {Bollinger}, \citenamefont {Allcock},\ and\ \citenamefont
  {Slichter}}]{PRXQuantum.5.020314}%
  \BibitemOpen
  \bibfield  {author} {\bibinfo {author} {\bibfnamefont {S.~C.}\ \bibnamefont
  {Burd}}, \bibinfo {author} {\bibfnamefont {H.~M.}\ \bibnamefont {Knaack}},
  \bibinfo {author} {\bibfnamefont {R.}~\bibnamefont {Srinivas}}, \bibinfo
  {author} {\bibfnamefont {C.}~\bibnamefont {Arenz}}, \bibinfo {author}
  {\bibfnamefont {A.~L.}\ \bibnamefont {Collopy}}, \bibinfo {author}
  {\bibfnamefont {L.~J.}\ \bibnamefont {Stephenson}}, \bibinfo {author}
  {\bibfnamefont {A.~C.}\ \bibnamefont {Wilson}}, \bibinfo {author}
  {\bibfnamefont {D.~J.}\ \bibnamefont {Wineland}}, \bibinfo {author}
  {\bibfnamefont {D.}~\bibnamefont {Leibfried}}, \bibinfo {author}
  {\bibfnamefont {J.~J.}\ \bibnamefont {Bollinger}}, \bibinfo {author}
  {\bibfnamefont {D.~T.~C.}\ \bibnamefont {Allcock}},\ and\ \bibinfo {author}
  {\bibfnamefont {D.~H.}\ \bibnamefont {Slichter}},\ }\bibfield  {title}
  {\bibinfo {title} {Experimental speedup of quantum dynamics through
  squeezing},\ }\href {https://doi.org/10.1103/PRXQuantum.5.020314} {\bibfield
  {journal} {\bibinfo  {journal} {PRX Quantum}\ }\textbf {\bibinfo {volume}
  {5}},\ \bibinfo {pages} {020314} (\bibinfo {year} {2024})}\BibitemShut
  {NoStop}%
\bibitem [{\citenamefont {Proctor}\ \emph {et~al.}(2018)\citenamefont
  {Proctor}, \citenamefont {Knott},\ and\ \citenamefont
  {Dunningham}}]{PhysRevLett.120.080501}%
  \BibitemOpen
  \bibfield  {author} {\bibinfo {author} {\bibfnamefont {T.~J.}\ \bibnamefont
  {Proctor}}, \bibinfo {author} {\bibfnamefont {P.~A.}\ \bibnamefont {Knott}},\
  and\ \bibinfo {author} {\bibfnamefont {J.~A.}\ \bibnamefont {Dunningham}},\
  }\bibfield  {title} {\bibinfo {title} {Multiparameter estimation in networked
  quantum sensors},\ }\href {https://doi.org/10.1103/PhysRevLett.120.080501}
  {\bibfield  {journal} {\bibinfo  {journal} {Phys. Rev. Lett.}\ }\textbf
  {\bibinfo {volume} {120}},\ \bibinfo {pages} {080501} (\bibinfo {year}
  {2018})}\BibitemShut {NoStop}%
\bibitem [{\citenamefont {Fujiwara}(2006)}]{fujiwara}%
  \BibitemOpen
  \bibfield  {author} {\bibinfo {author} {\bibfnamefont {A.}~\bibnamefont
  {Fujiwara}},\ }\bibfield  {title} {\bibinfo {title} {{Strong consistency and
  asymptotic efficiency for adaptive quantum estimation problems}},\ }\href
  {https://doi.org/10.1088/0305-4470/39/40/014} {\bibfield  {journal} {\bibinfo
   {journal} {J. Phys. A: Math. Theor.}\ }\textbf {\bibinfo {volume} {39}},\
  \bibinfo {pages} {12489} (\bibinfo {year} {2006})}\BibitemShut {NoStop}%
\bibitem [{\citenamefont {Grover}\ and\ \citenamefont
  {Rudolph}(2002)}]{grovrud}%
  \BibitemOpen
  \bibfield  {author} {\bibinfo {author} {\bibfnamefont {L.}~\bibnamefont
  {Grover}}\ and\ \bibinfo {author} {\bibfnamefont {T.}~\bibnamefont
  {Rudolph}},\ }\href {https://arxiv.org/abs/quant-ph/0208112} {\bibinfo
  {title} {Creating superpositions that correspond to efficiently integrable
  probability distributions}} (\bibinfo {year} {2002}),\ \Eprint
  {https://arxiv.org/abs/quant-ph/0208112} {arXiv:quant-ph/0208112 [quant-ph]}
  \BibitemShut {NoStop}%
\bibitem [{\citenamefont {Kitaev}\ and\ \citenamefont {Webb}(2009)}]{kitaev}%
  \BibitemOpen
  \bibfield  {author} {\bibinfo {author} {\bibfnamefont {A.}~\bibnamefont
  {Kitaev}}\ and\ \bibinfo {author} {\bibfnamefont {W.~A.}\ \bibnamefont
  {Webb}},\ }\href@noop {} {\bibinfo {title} {Wavefunction preparation and
  resampling using a quantum computer}} (\bibinfo {year} {2009}),\ \Eprint
  {https://arxiv.org/abs/0801.0342} {0801.0342} \BibitemShut {NoStop}%
\end{thebibliography}%

\appendix
\section{Statistical theory for first excited state\label{sec:app1}}
The probability distribution for $q$-quadrature measurement in the first excited state ($n = 1$) of the quantum harmonic oscillator can be written in exponential family form \cite{lehmanntheory}:
\begin{equation*}
p(q;d) = h(q)\exp[d T(q)-B(d)]
\end{equation*}
where:
\begin{align*}
h(q) &= \frac{2q^2}{\sqrt{\pi}} \\
T(q) &= -q^2 \\
B(d) &= -\frac{3}{2}\log(d).
\end{align*}
Note that for larger $n$, we cannot similarly rewrite $\langle q=x\vert \psi_{n}(d)\rangle^{2}$ in exponential family form because the term $H^2_n(q\sqrt{d})$ does not factor as a product of $d$-dependent terms and $q$-dependent terms. 
For $n =1$, both the MOM and MLE are $\frac{3M}{2\sum_{i=1}^M q_i^2}$ (recall from Section \ref{sec:esttheor} that this equivalence does not hold for general $n$). Because for the probe state $\ket{\psi_{1}(d)}$ the probability distribution function for $q$-measurement is in the exponential family of distributions, it satisfies regularity conditions (see, e.g., Chapter 6 of \cite{lehmanntheory}) that imply that the MLE $\hat{d}$ is asymptotically unbiased and efficient. In this case, the MOM estimator equals the MLE and achieves the quantum Cram\'{e}r-Rao bound according to Theorem \ref{thm:ttt}. We next show that a Bayesian estimator can also be derived in analytical form.

\begin{theorem}Assume $M$ independent, identical $q$-measurements of $\ket{\psi_{1}(d)}$, referring to the collection of measurements  as $\mathbf{q}=(q_1,...,q_M)$. Assume a gamma prior $p(d) = \frac{\lambda^s}{\Gamma(s)}d^{s-1} \exp(-\lambda d)$, with shape $s$ and rate $\lambda$. Then the posterior distribution is also gamma distributed with shape $3M/2+s$ and rate $[\sum_{i=1}^M q^2_i]+\lambda$; i.e., the gamma distribution is a conjugate prior for $d$.
\end{theorem}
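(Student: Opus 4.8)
The plan is to carry out the Bayesian update directly, using the explicit exponential-family form of the $q$-measurement density already recorded above. First I would write the single-shot likelihood for the probe state $\ket{\psi_{1}(d)}$, namely
\begin{equation*}
p(q;d)=\frac{2q^{2}}{\sqrt{\pi}}\,d^{3/2}\,e^{-d q^{2}},
\end{equation*}
which is $\langle q=x\vert\psi_{1}(d)\rangle^{2}$ evaluated with $H_{1}(u)=2u$ (equivalently, the exponential-family parametrization with natural parameter $d$, sufficient statistic $T(q)=-q^{2}$, log-partition $B(d)=-\tfrac{3}{2}\log d$, and carrier $h(q)=2q^{2}/\sqrt{\pi}$). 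For $M$ independent, identical measurements $\mathbf{q}=(q_{1},\dots,q_{M})$ the joint density factors as
\begin{equation*}
p(\mathbf{q};d)=\Big(\prod_{i=1}^{M}\frac{2q_{i}^{2}}{\sqrt{\pi}}\Big)\,d^{3M/2}\,e^{-d\sum_{i=1}^{M}q_{i}^{2}},
\end{equation*}
so that the entire dependence on $d$ is a power $d^{3M/2}$ times an exponential $e^{-d\sum q_{i}^{2}}$.

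Next I would multiply by the gamma prior $p(d)=\frac{\lambda^{s}}{\Gamma(s)}d^{s-1}e^{-\lambda d}$ and collect the $d$-dependent factors. Up to a normalizing constant that does not involve $d$ (it absorbs the $q_{i}^{2}$ and $\pi$ factors and $\lambda^{s}/\Gamma(s)$), the posterior $p(d\vert\mathbf{q})$ is proportional to
\begin{equation*}
d^{(3M/2+s)-1}\,\exp\!\Big(-\big(\textstyle\sum_{i=1}^{M}q_{i}^{2}+\lambda\big)\,d\Big),
\end{equation*}
which is exactly the kernel of a gamma density with shape $3M/2+s$ and rate $\sum_{i=1}^{M}q_{i}^{2}+\lambda$. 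Fixing the normalization with the gamma integral $\int_{0}^{\infty}x^{a-1}e^{-bx}\,dx=\Gamma(a)/b^{a}$ then gives the stated posterior, and since prior and posterior are both gamma this establishes conjugacy.

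There is no substantive obstacle: the argument is pure bookkeeping once the $d$-dependence of the likelihood has been isolated, and it is the textbook instance of a conjugate prior for a one-parameter exponential family in which the natural parameter is $d$ itself. The only point I would flag explicitly is that the non-Gaussian carrier $2q_{i}^{2}/\sqrt{\pi}$ is $d$-independent and therefore plays no role in the update; this is precisely the structural feature that fails for $n\ge 2$, where $H_{n}^{2}(q\sqrt{d})$ does not split into a product of $d$-dependent and $q$-dependent terms, so the closed-form conjugate analysis is special to the first excited state.
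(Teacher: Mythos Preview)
Your proof is correct and follows essentially the same approach as the paper: multiply the likelihood by the gamma prior, drop the $d$-independent factors, and recognize the resulting $d^{(3M/2+s)-1}e^{-(\sum q_i^2+\lambda)d}$ as a gamma kernel. Your version is simply more explicit about the single-shot density, the factorization over $M$ measurements, and the normalization via the gamma integral, but the underlying argument is identical.
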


\begin{proof}The posterior distribution  $p(d | \mathbf{q})$ is proportional to the product of the prior and likelihood. Dropping constants not depending on $d$ yields  \begin{equation}p(d | \mathbf{q})\propto d^{3M/2+s-1} \exp(-\left(\sum_{i=1}^M q^2_i+\lambda\right) d),\end{equation} which can be recognized as a gamma distribution with stated shape and rate parameters.\end{proof}

As a consequence, the posterior mean is given by the expression $(3M/2+s)/(\sum_{i=1}^M q^2_i+\lambda)$. We note that from a decision-theoretic perspective, the posterior mean has favorable properties under a squared-error loss though for scale parameters other loss functions such as (\ref{eqn:covcost}) could be preferred. Regardless of the loss function chosen, the full posterior distribution is known exactly (i.e., including normalizing constant) under the gamma prior. 

Alternative priors motivated by the geometry of the parametrized state manifold can lead to exact agreement between the posterior mean and the MLE. For example, because the QFI is proportional to $1/d^2$ for any $n$, the Jeffrey's prior is proportional to $1/d$. Although the prior is improper since the integral of $1/d$ on $[0,\infty)$ does not converge, the posterior distribution is proportional to $d^{3M/2-1} \exp(-(\sum_{i=1}^M q^2_i) d)$, which is a proper gamma distribution for $M \geq 1$. In this case, the posterior mean is $\frac{3M}{2\sum_{i=1}^M q^2_i}$, which is equal to the MLE and MOM estimators.

\end{document}